\documentclass[%
reprint,
nofootinbib,
mathtools,
amssymb,
aps,
prl,
superscriptaddress,
usenames,dvipsnames,
floatfix
]{revtex4-2}

\makeatletter
\newcommand{\customlabel}[2]{%
	#1\protected@edef\@currentlabel{#1}\label{#2}%
}
\makeatother

\usepackage{amssymb, amsmath, xcolor, enumitem, amsthm, scrextend, setspace, hyperref}
\usepackage[scr]{rsfso}
\hypersetup{colorlinks=true,
	breaklinks=true,
	linkcolor=[rgb]{0, 0.2, 0.5},
	citecolor=[rgb]{0, 0.35, 0},      
	urlcolor=[rgb]{0, 0.2, 0.5},
	pdfpagemode=UseOutlines}
\usepackage[most]{tcolorbox}
\tcbuselibrary{skins}
\usepackage{cleveref}

\crefname{inequality}{Ineq.}{Ineqs.}
\crefname{figure}{Fig.}{Figs.}
\crefname{equation}{Eq.}{Eqs.}
\newtheorem{theorem}{Theorem}
\newtheorem{lemma}[theorem]{Lemma}

\newtheorem{fact}[theorem]{Fact}
\newtheoremstyle{myplain}{3pt}{0pt}{\upshape}{}{\bfseries} {.}{.5em}{} 
\theoremstyle{myplain}
\newtheorem{defn}[theorem]{Definition}

\usepackage{tikz}
\usetikzlibrary{quantikz2}
\usepackage[caption=false]{subfig}
\usepackage{adjustbox}
\usepackage{algpseudocode}
\usepackage{algorithm}

\definecolor{blue(ncs)}{rgb}{0.0, 0.53, 0.74}
\definecolor{ao(english)}{rgb}{0.0, 0.5, 0.0} 
\definecolor{darklavender}{rgb}{0.45, 0.31, 0.59}
\definecolor{burntornge}{rgb}{0.8, 0.33, 0.0} 
\definecolor{carnelian}{rgb}{0.7, 0.11, 0.11} 

\newcommand{\ugrshort}{Quantum Thermodynamics and Computation Group, Electromagnetism and Matter Physics Department, University of Granada, Granada, Spain}
\newcommand{\jku}{Department for Quantum Information and Computation at Kepler (QUICK), Johannes Kepler University, Linz, Austria}
\newcommand{\uottawa}{Department of Mathematics and Statistics, University of Ottawa, Ottawa, Canada}
\newcommand{\luh}{Institut f{\"{u}}r Theoretische Physik, Leibniz Universit{\"{a}}t Hannover, Appelstra{\ss}e 2, 30167 Hannover, Germany}
\newcommand{\bilkent}{Department of Mathematics, Bilkent University, Ankara, Turkey}

\begin{document}
	
	\title{Blind Quantum Computation with a Small Quantum Server}
	
	\author{Daniel Lovsted}
	\affiliation{\uottawa}
	
	\author{Filipa C.\,R. Peres}
	\affiliation{\ugrshort}
	\affiliation{\jku}
	
	\author{Joshua Nevin}
	\affiliation{\uottawa}
	
	\author{Selman Ipek}
	\affiliation{\luh}
	\affiliation{\bilkent}
	
	\author{Anne Broadbent}
	\affiliation{\uottawa}

	\begin{abstract}
		Blind quantum computation (BQC) allows low-resource clients to securely delegate computations to a quantum server, but server resource costs scale with the computation size, posing a bottleneck for implementations. By leveraging Pauli-based computation (PBC), we achieve BQC with a server whose size depends only on the non-Clifford gate count. Our protocol inherits fault tolerance and qubit virtualization from PBC and reveals that classical-client BQC is possible even in the absence of classical simulability. Finally, we present an entanglement-based dual protocol that performs a resource state computation with a dramatically reduced execution cost.
	\end{abstract}
	
	\maketitle
	
	\textit{Introduction.---}Quantum computation promises substantial speedups over classical computation for important classes of problems~\cite{Pre18}. For the foreseeable future, quantum computing resources are likely to be scarce and computations will thus be outsourced to remote quantum servers, leaving clients' data vulnerable to theft by a malicious server. The cryptographic task of blind quantum computation (BQC) addresses this risk: In BQC, a low-resource client delegates a computation to a quantum server while keeping aspects of it secret.
	
	Many BQC protocols have been proposed~\cite{Chi05, BFK09, Bro15}, differing primarily in their resource costs for the client, communication, and server. Client resources are well studied and have been improved to levels believed to be optimal or near-optimal for protocols with information-theoretic security~\cite{ACGK19}. The same is true of communication resources~\cite{MPDF13, DK25arxiv}. Server resources, however, remain largely unoptimized, posing an obstacle to continued progress in implementing BQC. While recent experimental demonstrations have succeeded in interfacing photonic client-side devices with trapped-ion~\cite {DNM+24} and solid-state~\cite{WSS+25} servers, opening the door for BQC servers to harness advances in matter-based platforms~\cite{DMP+21, AWR+22, PDF+21, MBA+23}, it also raises the question of how BQC protocols can most efficiently use server-side processors that are still scale-limited.
	
	In this Letter, we present a BQC protocol that significantly reduces the quantum server size, while matching the best-known client and communication resource costs: For an $n$-qubit quantum circuit $C$ with $t$ non-Clifford~$T$ gates, we require a quantum server with only $t$ qubits, compared to previous methods whose server size scales with both $n$ \textit{and} $t$. The protocol allows the client to hide a classical input $\textbf{x}$ with perfect security and receive a sample from the output distribution of $C\ket{\textbf{x}}$, while requiring just four-way random state preparation and a single step of quantum communication of $t$ qubits, matching the best client and communication costs of previous input-hiding protocols~\cite{Bro15}.
	
	Our protocol extends the Pauli-based computation~(PBC) model of~\cite{BSS16}, which samples from the output distribution of any universal quantum computation via a sequence of adaptively chosen Pauli measurements performed on a magic state. We introduce blindness to PBC by having the client apply secret, random rotations to the initial state, request Pauli measurements from the server via a modified classical algorithm, and perform a final correction to the measurement outcomes returned by the server. Our protocol inherits other benefits of PBC, including fault tolerance and the possibility of classically ``virtualizing'' $k$ more qubits than are physically available at a cost exponential in $k$ only, while maintaining security. Moreover, we identify a class of circuits which is believed \textit{not} to be efficiently classically simulable and which our protocol executes with perfect security via a fully classical client. This makes progress on an open question in quantum cryptography about the limits of classical clients in BQC~\cite[Sec.~5]{BS16}.
	
	\begin{figure*}[t]
		\begin{adjustbox}{width=\textwidth}
			\tikzset{double_arrow/.style={-{Implies[]}, thick, double}
			}
			
			
			\begin{tikzpicture}
				\node(C){
					\begin{quantikz}[ampersand replacement=\&,wire types={q,n,q}]
						\lstick[3,braces={transform canvas={xshift=5pt}},label style={xshift=5pt}]{$t$} \midstick{\ket{+}} \& \gate[style={fill=burntornge!20,draw=burntornge}]{T} \& \gate[3, style={fill=blue(ncs)!20,draw=blue(ncs),rounded corners}]{Q_1} \& \push{~\dots~} \& \gate[3, style={fill=blue(ncs)!20,draw=blue(ncs),rounded corners}]{Q_{t^\prime}} \\
						\midstick[label style={yshift=3pt}]{\vdots} \& \midstick[label style={yshift=3pt}]{\vdots} \& \& \midstick[label style={yshift=3pt}]{\vdots} \& \\
						\midstick{\ket{+}} \& \gate[style={fill=burntornge!20,draw=burntornge}]{T} \& \& \push{~\dots~} \&
					\end{quantikz}
				};
				
				\draw[double_arrow]
				([xshift=-2pt,yshift=-1pt]\tikzcdmatrixname-3-3.south east) -- ([xshift=-2pt,yshift=-22pt]\tikzcdmatrixname-3-3.south east) node[midway,xshift=-4pt,label={[font=\small]right:{$b_1$}}] {};
				
				\draw[double_arrow]
				([xshift=-3pt,yshift=-1pt]\tikzcdmatrixname-3-5.south east) -- ([xshift=-3pt,yshift=-22pt]\tikzcdmatrixname-3-5.south east)
				node[midway,xshift=-4pt,label={[font=\small]right:{$b_{t^\prime}$}}] {};
				
				\draw[double_arrow]
				{([xshift=10pt,yshift=-22pt]\tikzcdmatrixname-3-3.south west) -- node[midway,xshift=5.5pt,label={[font=\small]left:{$Q_1$}}] {} ([xshift=10pt,yshift=-1pt]\tikzcdmatrixname-3-3.south west)} ;
				
				\draw[double_arrow]
				{([xshift=11pt,yshift=-22pt]\tikzcdmatrixname-3-5.south west) -- node[midway,xshift=5.5pt,label={[font=\small]left:{$Q_{t^\prime}$}}] {} ([xshift=11pt,yshift=-1pt]\tikzcdmatrixname-3-5.south west)} ;
				
				\draw[ao(english), fill=ao(english)!20, thick] ([xshift=-6pt,yshift=-31pt]\tikzcdmatrixname-3-3) rectangle ([xshift=20pt,yshift=-49pt]\tikzcdmatrixname-3-5) node[midway, yshift=-0.5pt, style={black}] {$\mathsf{PBC.Sel}(C)$};
				
				\draw[double_arrow]
				{([xshift=22pt,yshift=-40pt]\tikzcdmatrixname-3-5.center) -- ([xshift=32pt,yshift=-40pt]\tikzcdmatrixname-3-5.center)} node[xshift=-4pt,label = right:{$\sim C\ket{\textbf{0}}$}] {};
				
				\draw[very thick] {([xshift=-14pt,yshift=-73pt]\tikzcdmatrixname-3-1.center) -- ([xshift=-14pt,yshift=-78pt]\tikzcdmatrixname-3-1.center) -- ([xshift=-14pt,yshift=-78pt]\tikzcdmatrixname-3-3.center) node [midway, below] {\textsf{i. preparation}} -- ([xshift=-14pt,yshift=-73pt]\tikzcdmatrixname-3-3.center)} {};
				
				\draw[very thick] {([xshift=-10pt,yshift=-73pt]\tikzcdmatrixname-3-3.center) -- ([xshift=-10pt,yshift=-78pt]\tikzcdmatrixname-3-3.center) -- ([xshift=24pt,yshift=-78pt]\tikzcdmatrixname-3-5.center) node [midway, below] {\textsf{ii. measurement}} -- ([xshift=24pt,yshift=-73pt]\tikzcdmatrixname-3-5.center)} {};
				
				\draw[very thick] {([xshift=28pt,yshift=-73pt]\tikzcdmatrixname-3-5.center) -- ([xshift=28pt,yshift=-78pt]\tikzcdmatrixname-3-5.center) -- ([xshift=71pt,yshift=-78pt]\tikzcdmatrixname-3-5.center) node [midway, below] {\textsf{iii. output}} -- ([xshift=71pt,yshift=-73pt]\tikzcdmatrixname-3-5.center)} {};
				
				\node[font=\Large] at ([xshift=14pt,yshift=16pt]C.north west) {\textsf{(a)}};
				
			\end{tikzpicture}
			
			\hspace{0.3cm}
			
			
			\begin{tikzpicture}
				\tikzset{double_arrow/.style={-{Implies[]}, thick, double}
				}
				\node(C){
					\begin{quantikz}[ampersand replacement=\&,wire types={n,n,n,n,q,n,q}]
						\& \& \& \& \& \lstick[3]{} \& \gate[style={fill=burntornge!20,draw=burntornge}]{T}\setwiretype{q} \& \gate[3, style={fill=blue(ncs)!20,draw=blue(ncs),rounded corners}]{Q^*_1} \& \push{~\dots~} \& \gate[3, style={fill=blue(ncs)!20,draw=blue(ncs),rounded corners}]{Q^*_{t^\prime}} \\
						\& \& \& \& \& \& \midstick[label style={yshift=3pt}]{\vdots} \& \& \midstick[label style={yshift=3pt}]{\vdots} \& \\
						\& \& \& \& \& \& \gate[style={fill=burntornge!20,draw=burntornge}]{T}\setwiretype{q} \& \& \push{~\dots~} \& \\
						\& \& \& \& \& \& \& \& \& \& \& \& \& \& \& \& \\
						\lstick[3,braces={transform canvas={xshift=5pt}},label style={xshift=5pt}]{$t$} \midstick{\ket{+}} \& \gate[disable auto height, style={fill=burntornge!20,draw=burntornge}][24pt]{S^{d_1}} \& \gate[style={fill=burntornge!20,draw=burntornge}]{Z^{c_1}} \& \rstick[3]{} \& \setwiretype{n} \& \& \& \& \& \& \& \&  \\
						\midstick[label style={yshift=3pt}]{\vdots} \& \midstick[label style={yshift=3pt}]{\vdots} \& \midstick[label style={yshift=3pt}]{\vdots} \& \& \& \& \& \& \& \& \&  \\
						\midstick{\ket{+}} \& \gate[disable auto height, style={fill=burntornge!20,draw=burntornge}][24pt]{S^{d_t}} \& \gate[style={fill=burntornge!20,draw=burntornge}]{Z^{c_t}} \& \& \setwiretype{n} \& \& \& \& \& \& \& \& \& \& \& \& 
					\end{quantikz}
				};
				
				\draw[line cap=round, dashed, very thick]{([xshift=-12pt,yshift=-6pt]\tikzcdmatrixname-4-1.center) -- ([xshift=4pt,yshift=-6pt]\tikzcdmatrixname-4-17.center)} node [label = above:{\textsf{\textit{server}}}] [label = below:{\textsf{\textit{client}}}] {};
				
				\draw[thick, rounded corners, -{>[length=4pt, width=5pt]}] {([xshift=-1pt]\tikzcdmatrixname-6-5.center) -- ([xshift=2pt]\tikzcdmatrixname-6-5.center) -- ([xshift=2pt]\tikzcdmatrixname-2-5.center) -- ([xshift=10pt]\tikzcdmatrixname-2-5.center)} {};
				
				\draw[double_arrow]
				{([xshift=-1.5pt,yshift=-1pt]\tikzcdmatrixname-3-8.south east) -- node[midway,xshift=-4pt,yshift=12pt,label={[font=\small]right:{$b^*_{1}$}}] {}([xshift=-1.5pt,yshift=-41pt]\tikzcdmatrixname-3-8.south east)} ;
				
				\draw[double_arrow]
				{([xshift=-2.5pt,yshift=-1pt]\tikzcdmatrixname-3-10.south east) -- node[midway,xshift=-4pt,yshift=12pt,label={[font=\small]right:{$b^*_{t^\prime}$}}] {} ([xshift=-2.5pt,yshift=-41pt]\tikzcdmatrixname-3-10.south east)} ;
				
				\draw[double_arrow]
				{([xshift=10.5pt,yshift=-41pt]\tikzcdmatrixname-3-8.south west) -- node[midway,xshift=5.5pt,yshift=-12pt,label={[font=\small]left:{$Q^*_{1}$}}] {}([xshift=10.5pt,yshift=-1pt]\tikzcdmatrixname-3-8.south west)} {};
				
				\draw[double_arrow]
				{([xshift=11.5pt,yshift=-41pt]\tikzcdmatrixname-3-10.south west) -- node[midway,xshift=5.5pt,yshift=-12pt,label={[font=\small]left:{$Q^*_{t^\prime}$}}] {}([xshift=11.5pt,yshift=-1pt]\tikzcdmatrixname-3-10.south west)} {};
				
				\draw[ao(english), fill=ao(english)!20, thick] ([xshift=-5.5pt,yshift=-6pt]\tikzcdmatrixname-5-8) rectangle ([xshift=-9.5pt,yshift=-2pt]\tikzcdmatrixname-6-11) node[midway, yshift=-0.5pt, style={black}] {$\mathsf{PBC.Sel}(C^*_\textbf{x})$};
				
				\draw[darklavender, fill=darklavender!20, thick] ([xshift=2.5pt,yshift=-6pt]\tikzcdmatrixname-5-11) rectangle ([xshift=36.5pt,yshift=-2pt]\tikzcdmatrixname-6-11) node[midway, yshift=-0.5pt,style={black}] {\textsf{\textit{corr.}}};
				
				\draw[double_arrow]
				{([xshift=-8.5pt,yshift=-15pt]\tikzcdmatrixname-5-11.center) -- ([xshift=1.5pt,yshift=-15pt]\tikzcdmatrixname-5-11.center)} {};
				
				\draw[double_arrow]
				{([xshift=9.5pt,yshift=-15pt]\tikzcdmatrixname-5-13.center) -- ([xshift=19.5pt,yshift=-15pt]\tikzcdmatrixname-5-13.center)} node[xshift=-4pt,label = right:{$\sim C\ket{\textbf{x}}$}] {};
				
				\draw[very thick] {([xshift=-13.5pt,yshift=-11pt]\tikzcdmatrixname-7-1.center) -- ([xshift=-13.5pt,yshift=-16pt]\tikzcdmatrixname-7-1.center) -- ([xshift=-11.5pt,yshift=-16pt]\tikzcdmatrixname-7-8.center) node [midway, below] {\textsf{i. preparation}} -- ([xshift=-11.5pt,yshift=-11pt]\tikzcdmatrixname-7-8.center)} {};
				
				\draw[very thick] {([xshift=-7.5pt,yshift=-11pt]\tikzcdmatrixname-7-8.center) -- ([xshift=-7.5pt,yshift=-16pt]\tikzcdmatrixname-7-8.center) -- ([xshift=-5.5pt,yshift=-16pt]\tikzcdmatrixname-7-11.center) node [midway, below] {\textsf{ii. measurement}} -- ([xshift=-5.5pt,yshift=-11pt]\tikzcdmatrixname-7-11.center)} {};
				
				\draw[very thick] {([xshift=-1.5pt,yshift=-11pt]\tikzcdmatrixname-7-11.center) -- ([xshift=-1.5pt,yshift=-16pt]\tikzcdmatrixname-7-11.center) -- ([xshift=30.5pt,yshift=-16pt]\tikzcdmatrixname-7-15.center) node [midway, below] {\textsf{iii. output}} -- ([xshift=30.5pt,yshift=-11pt]\tikzcdmatrixname-7-15.center)} {};
				
				\node[font=\Large] at ([xshift=14pt,yshift=-12pt]C.north west) {\textsf{(b)}};
			\end{tikzpicture}
		\end{adjustbox}
		
		\caption{(a) Pauli-based computation (PBC) and (b) Blind PBC, for a circuit~$C$ with~$t$~$T$ gates. Information flows from left to right; single (resp.~double) lines denote quantum (resp.~classical) information. Each protocol has three stages. (i) Preparation of an initial state. In Blind PBC, the client chooses secret random bits~$c_i,d_i \in \{0,1\}$ and applies rotations. (ii) Measurement of $t^\prime\leq t$ Pauli observables~$Q_j$, with outcomes $b_j \in \{0,1\}$, selected adaptively by the classical $\textsf{PBC.Sel}$ algorithm. In Blind PBC, the client runs~$\textsf{PBC.Sel}$ on a circuit~$C^*_\textbf{x}$ derived from~$C$ and her secret classical input~$\textbf{x}$. (iii) Output of a sample from the desired distribution, determined by~$\textsf{PBC.Sel}$. In Blind PBC, the client corrects~$\textsf{PBC.Sel}$ classically before returning the output.}
		
		\label{fig:pbc}
	\end{figure*}
	
	Finally, we present an entanglement-based dual to our blind protocol in which the Pauli measurement sequence is performed \textit{prior} to the choice of input. Under this time reversal, the dual property to the cryptographic input-blindness property is input-independence: We show that the dual protocol prepares a computation-dependent but input-independent resource state for the circuit~$C$, which can then be executed with minimal quantum resources to sample from $C\ket{\textbf{x}}$, for any~$\textbf{x}$. Specifically, for $C$ with $n$ qubits, depth $d$, and $t$ non-Clifford $T$ gates, the execution phase requires a~$t$-qubit quantum memory and the ability to perform single-qubit measurements in the Pauli-$X$ and $Y$ bases only. In allowing execution with a minimal set of measurement bases, our method resembles hypergraph-based computation~\cite{MM16, GGM19, TMH19}, but with a greatly reduced state size compared to the universal resource states of hypergraph models, whose size scales as $O(nd)$. Our model also parallels work on computation-specific measurement-based quantum computation (MBQC) using graph states~\cite{RBB03, VPG+24, KH25, PG25}, but unlike the latter it requires no non-Pauli measurements for execution. We thus position PBC as a bridge between hypergraph computation and computation-specific MBQC, capturing the resource optimality of each approach.

	\textit{Blind PBC versus existing BQC.---}We introduce \textit{Blind PBC}, a BQC protocol which allows a resource-limited client, with help from a quantum server, to sample from the output of a circuit~$C$ run on a computational basis input~$\ket{\textbf{x}}$ of her choosing, while maintaining perfect, information-theoretic secrecy of~$\textbf{x} \in \{0,1\}^n$ from the server. Blind PBC thus compares to BQC protocols that hide the input with perfect security in classical input/output regimes, among which the quantum computing on encrypted data (QCED) protocol~\cite{FBS+14, Bro15} has the lowest resource costs. Blind PBC improves on the server resource cost of QCED while matching client and communication costs: Blind PBC employs a server with just~$t$ qubits, while QCED uses a server of size~$n+t$ when run with a single instance of quantum communication. The client resources are identical in Blind PBC and QCED (four-way random state preparation of~$t$ qubits), as are communication costs ($t$ qubits, sent in one ahead-of-time, one-way instance). We note that QCED may also be executed with~$t$ separate instances of just-in-time quantum communication from client to server, in which case a server of~$n+1$ qubits suffices. Thus, Blind PBC offers a regime-specific server size improvement and a strict communication cost improvement. 
	
	Blind PBC inherits other benefits from PBC. It can be implemented fault-tolerantly, since it uses measurements of Pauli observables only. Additionally, in perfect analogy with PBC, it enables quantum memory reduction via the simulation of ``virtual qubits,'' so that the outcome of a computation on~$t+k$ qubits can be estimated on a register of size~$t$, with a sampling complexity exponential in~$k$ only~\cite{BSS16, PG23}. In the blind setting, client and communication resource costs also scale exponentially in $k$; moreover, the procedure remains perfectly secure.
	
	While QCED defaults to a classical client exactly when~$t=0$, i.e., for Clifford circuits, Blind PBC does so for a strictly larger class of circuits. This class includes circuits with arbitrarily many~$T$ gates and which are believed to display quantum advantage. The limits of classical-client BQC are an open question in quantum cryptography: While circuit-hiding, perfectly secure BQC of universal quantum computation is unlikely with a classical client~\cite{ACGK19}, little is known about the boundary between Cliffordness and universality, where a classical client becomes insufficient. Our result provides evidence that classical simulability is \textit{not} a necessary condition for a circuit class to be delegated by a classical client with perfect security.

	\textit{Pauli-based computation.---}Blind PBC is designed in the Pauli-based computation (PBC) model of~\cite{BSS16}. PBC produces a sample from the output distribution of an arbitrary quantum computation, expressed as a Clifford+$T$ circuit~$C$ on~$n$ qubits, with~$t$~$T$~gates, and~$w \leq n$ final readout measurements, run on the all-zero state~$\ket{\textbf{0}}$. It does so by an adaptive sequence of at most~$t$ nondestructive measurements of independent and commuting Pauli operators on~$t$ qubits.
	
	For comparison with the blind version, we present PBC in three stages (\cref{fig:pbc}(a)). We use~$\ket{+}$ for the~$+1$ eigenstate of the Pauli-$X$ operator, $T \coloneqq \text{diag}(1,e^{i\pi/4})$, and~$\ket{T} \coloneqq T\ket{+}$. In the first stage, the magic state~$\ket{T}^{\otimes t}$ is initialized. Second, a classical algorithm, \textsf{PBC.Sel}, takes as input~$C$ and previous measurement outcomes~$\{b_j\}_{j=1}^{k}$ to select the subsequent measurement $Q_{k+1}$. Third, \textsf{PBC.Sel} returns a string~$\textbf{y} \in \{0,1\}^w$, which constitutes the desired sample from~$C\ket{\textbf{0}}$. Some bits of $\textbf{y}$ are previous measurement outcomes~$b_j$ and others are classically determined.

	\textit{Blind PBC.---}We now present our BQC protocol (\cref{fig:pbc}(b)). $C$ has parameters~$n$, $t$, and~$w$ as above.\footnote{The protocol was first reported in the M.Sc. thesis of one of the authors \cite{Lov26}.}
	
	Blind PBC has three stages, each one a modification of the corresponding stage in regular PBC. First, the client samples random bits~$c_i,d_i \in \{0,1\}$, for~$i\in[t]$, where~$[t]$ stands for $\{1,...,t\}$. According to these values, the client applies the rotation~$S^{c_i}Z^{d_i}$ to qubit~$i$ of a register initialized in~$\ket{+}^{\otimes t}$, where~$S \coloneqq \text{diag}(1,i)$. These~$t$ qubits are passed to the server who performs a~$T$ operation on each one. Second, the client chooses her input~$\textbf{x}$ and classically produces a circuit~$C_\textbf{x}^*$ from~$C$, $\textbf{x}$, and the~$c_i, d_i$ values chosen previously. The client then runs \textsf{PBC.Sel} on~$C_\textbf{x}^*$ and sends a classical description of each Pauli operator~$Q^*_j$ it selects to the server, who measures the requested operator and replies with outcome~$b^*_j$, which the client passes to \textsf{PBC.Sel}. Third, once all measurements are complete and \textsf{PBC.Sel} returns~$\textbf{y}$, the client computes a classical correction vector,~$\textbf{u}^* \in \{0,1\}^w$, and returns~$\textbf{y} \oplus \textbf{u}^*$, which is a sample from the distribution of~$C\ket{\textbf{x}}$. For simplicity, our presentation considers computational basis states as input, but more generally, our protocol can hide an input state drawn from any orthonormal basis of stabilizer states.
	
	\Cref{alg:bpbc} is the procedure by which the client computes~$C^*_\textbf{x}$ and~$\textbf{u}^*$. For the correctness of Blind PBC, we will ultimately need it to hold that
	\begin{equation}
		\label{eq:corr}
		\mathsf{PBC}\left(\bigotimes_{i=1}^tS^{c_i}Z^{d_i}\ket{T}, C_\textbf{x}^*\right) \oplus \textbf{u}^* \sim C\ket{\textbf{x}}\,,
	\end{equation}
	where, for any state $\ket{\psi}$ and circuit~$A$, $\textsf{PBC}(\ket{\psi},A)$ denotes the probabilistic map from $\ket{\psi}$ and~$A$ to an output $\{0,1\}^w$ induced by measuring~$\ket{\psi}$ as specified by $\textsf{PBC.Sel}(A)$, and $\sim$ denotes equivalent output distributions.
	
	Conceptually, the algorithm ensures \cref{eq:corr} is satisfied as follows. To begin, the client represents~$\textbf{x}$ as a circuit-initial sequence of Pauli-$X$ gates. The resulting circuit~$C_{\textbf{x}}$ is such that $C_{\textbf{x}}\ket{\textbf{0}} \sim C\ket{\textbf{x}}$, straightforwardly. Next, the client propagates the initial Pauli operators to the end of the circuit, pushing them past each gate~$G_k$. If~$G_k$ is Clifford, the client computes an updated Pauli that exactly preserves the equivalence of the circuit; this is possible since the Clifford group normalizes the Pauli group. If $G_k = T$, the client first updates the Pauli and conditionally adds an~$S$ gate to preserve equivalence (see \cref{fig:gadg}(a)). Second, since Blind PBC uses magic states of the form $S^{d}Z^{c}\ket{T}$ (instead of $\ket{T}$, which $\textsf{PBC.Sel}$ expects), the client adds gates~$S^d$, $Z^{d\overline{b}\oplus c}$ after~$G_k$, where~$\overline{b}$ is the bitflip of~$b$, according to the equivalence shown in \cref{fig:gadg}(b) and similarly to \cite{FBS+14, Bro15}. Third, gates following~$G_k$ are combined to yield at most one~$S$ gate and a Pauli. (If two~$S$ gates are present, from the procedures in \cref{fig:gadg}(a) \textit{and}~(b), they are combined to~$Z^{ud}$, which joins the Pauli term.) The leftover~$S$ gate, with exponent labeled~$s_i$ at the~$i$-th~$T$ gate, is added to the circuit at that position, and the Pauli continues being propagated forward. The circuit obtained once the Paulis reach the end of the circuit, just before the readout measurements, is denoted~$C_\textbf{x}^\prime$.  We have
	\begin{equation}
		\label{eq:intermediate}
		\mathsf{PBC}\left(\bigotimes_{i=1}^tS^{c_i}Z^{d_i}\ket{T}, C_\textbf{x}^\prime\right) \sim C\ket{\textbf{x}}\,,
	\end{equation}
	since all steps during the propagation have either preserved equivalence compared to~$C_\textbf{x}\ket{\textbf{0}}$ or have accounted for the rotations on the magic states.
	
	Finally, the Pauli operators are removed from the end of the circuit. The resulting circuit is $C_{\textbf{x}}^*$. The effect of the removed Paulis on the measurement outcomes would have been to flip the readout whenever the measured wire held an~$X$ gate. Thus, recording the indices of $X$ components as the correction vector $\textbf{u}^*$, deleting the Paulis from the circuit, and adding~$\textbf{u}^*$ to the circuit outcome, preserves the equivalence. Therefore, \cref{eq:corr} holds. 
	
	\begin{algorithm}[H]
		\caption{Obtaining~$C^*_\textbf{x}$ and~$\textbf{u}^*$ in Blind PBC. For $P \in \{X,Z\}$ and $\textbf{r} \in \{0,1\}^n$, $P(\textbf{r})$ is the $n$-qubit Pauli with~$P$ on qubit~$j$ whenever $\textbf{r}_j = 1$ and~$I$ otherwise. Vectors $\textbf{u}, \textbf{v} \in \{0,1\}^n$ track Pauli $X(\textbf{u})Z(\textbf{v})$. Vector positions are indicated by a subscript index (e.g. $\textbf{v}_\ell$).}
		\label{alg:bpbc}
		\textbf{Input:}  $n$-qubit circuit $C = (G_1...G_m,M)$ for gates~$G_k$ and~$M\subseteq [n]$ readout measurement indices with $|M| = w$; input~$\textbf{x}\in\{0,1\}^n$; and~$c_i,d_i \in \{0,1\}$ sampled before.\\
		\textbf{Output:} $C^*_{\textbf{x}}$ and correction vector~$\textbf{u}^* \in \{0,1\}^w$.
		\begin{algorithmic}[1]
			\State{$\textbf{u} \gets \textbf{x}$, $\textbf{v} \gets \textbf{0}$} \Comment{circuit-initial Pauli-$X$ sequence}
			\State{$C^*_{\textbf{x}} \gets \emptyset$} \Comment{initialize empty circuit}
			\State{$i \gets 0$} \Comment{$T$ gate counter}
			\For{$k=1$ to $m$} \Comment{push input to end}
			\If{$G_k \in \{H, S, \text{CNOT}\}$} \Comment{Clifford gate}
			\State{$\textbf{u} \gets \textbf{u}^\prime$ where $X({\textbf{u}^\prime}) = G_kX(\textbf{u}) G_k^\dagger$}
			\State{$\textbf{v} \gets \textbf{v}^\prime$ where $Z({\textbf{v}^\prime}) = G_kZ(\textbf{v}) G_k^\dagger$}
			\State{$C^*_{\textbf{x}} \gets C^*_{\textbf{x}}G_k$}
			\Else \Comment{$T$ gate}
			\State{$i \gets i+1$}
			\State{$\ell \gets$ index ($1$ to $n$) of wire that~$G_k$ acts on}
			\State{$\textbf{v}_\ell \gets \textbf{v}_\ell \oplus \textbf{u}_\ell \oplus \textbf{u}_\ell d_i \oplus d_i \overline{b_i} \oplus c_i $}
			\State{$s_i \gets \textbf{u}_\ell \oplus d_i$}
			\State{$C^*_{\textbf{x}} \gets C^*_{\textbf{x}}G_kS^{s_i}$}
			\EndIf
			\EndFor
			\State{$\textbf{u}^* \gets \textbf{u}|_M$, i.e. $\textbf{u}$ restricted to indices in~$M$}
			\State{\Return{$C^*_{\textbf{x}}$, $\textbf{u}^*$}}
		\end{algorithmic}
	\end{algorithm} 
	
	The security of Blind PBC is due to the form of~$C_\textbf{x}^*$. $C_\textbf{x}^*$ is the only part of the client-server interaction that depends on~$\textbf{x}$, and $C_\textbf{x}^*$ can differ from~$C$ only in that $C_\textbf{x}^*$ may have some inserted~$S$ gates, each immediately after a~$T$ gate. So, the server can learn about~$\textbf{x}$ at most what is contained in the exponents of the possible~$S$ gates, $s_i$ for~$i \in [t]$. But, informally, since each~$s_i$ has the form~$\textbf{u}_\ell \oplus d_i$, where $d_i$ is random and secret, the input-dependent content of $C_\textbf{x}^*$ is encrypted in a way comparable to the classical one-time pad. In the Supplemental Material, we define information-theoretic security for Blind PBC via a cryptographic simulator and formalize this intuition to prove that Blind PBC satisfies the security definition.
	
	The class of circuits for which Blind PBC can be executed by a classical client can now be characterized as those circuits for which~$s_i=0$ for all~$i \in [t]$ and all choices of input~$\textbf{x}$. In this case, since~$C = C^*_\textbf{x}$ for any~$\textbf{x}$, the client can request the outcome of a regular PBC of~$C$ on $\ket{T}^{\otimes t}$, effectively choosing~$c_i = d_i =0$, and use a final classical correction alone to sample from~$C\ket{\textbf{x}}$; this preserves correctness, and since no input-dependent communication passes from client to server, it is trivially secure. This class includes Clifford circuits (vacuously, since $t=0$) but is strictly larger. For example, it includes a subclass of instantaneous quantum polynomial-time (IQP) circuits with a diagonal layer composed of~$T$ and $CS := \text{diag}(1,1,1,i)$ gates. These circuits are believed to be hard to simulate classically and considered promising candidates for experimental quantum advantage demonstrations \cite{BMS16average, BMS17,CvdW25}.

	
	\begin{figure}[t]
		\begin{adjustbox}{width=\linewidth}
			\begin{tikzpicture}
				\node(C) {
					\begin{quantikz}[ampersand replacement=\&, column sep = 8pt]
						\& \gate{X^u} \& \gate{Z^v} \& \gate{T} \& \\        
					\end{quantikz}
				};
				\node at ([xshift=12pt,yshift=8pt]C.east) {=};
				
				\node at ([xshift=-12pt,yshift=-8pt]C.north west) {\textsf{(a)}};
				
			\end{tikzpicture}
			\hfill
			\begin{tikzpicture}
				\node(C) {
					\begin{quantikz}[ampersand replacement=\&, column sep = 8pt]
						\& \gate{T} \& \gate{S^u} \& \gate[disable auto height][28pt][15pt]{Z^{u \oplus v}} \& \gate{X^u} \& \\        
					\end{quantikz}
				};
			\end{tikzpicture}
		\end{adjustbox}
		
		\begin{adjustbox}{width=0.8\linewidth}
			\begin{tikzpicture}
				\node(B) {
					\begin{quantikz}[ampersand replacement=\&, column sep=8pt]
						\lstick{$\ket{\psi}$} \& \ctrl{1} \& \gate[disable auto height][18pt][16pt]{S^{b}} \& \gate[disable auto height][18pt][16pt]{S^d} \& \gate[disable auto height, label style={yshift=0.5pt}][34pt][16pt]{Z^{d \overline{b}\oplus c}} \& \rstick{$T\ket{\psi}$} \\
						\lstick{$S^dZ^c\ket{T}$} \& \targ{} \& \meter[label style={xshift=6pt,yshift=-2pt}]{b} \wire[u][1]{c}      
				\end{quantikz} };
				\node at ([xshift=-12pt,yshift=-18pt]B.north west) {\textsf{(b)}};
			\end{tikzpicture}
		\end{adjustbox}
		
		\caption{Circuit identities used to derive $C_\textbf{x}^*$ in Blind PBC. (a)~Propagating a Pauli past a~$T$ gate introduces a conditional~$S$ correction. (b)~Gadget for performing a~$T$ operation, where correction gates cancel out rotations on the magic state.}
		\label{fig:gadg}
	\end{figure}

	\textit{Tailored resource computation.---}Now we consider an entanglement-based, time-reordered dual protocol to Blind PBC, in which the Pauli measurements occur prior to the choice of input (\cref{fig:resource}). The dual protocol, which we call a tailored resource computation (TRC), produces a resource state on $t$ qubits specific to a computation represented as a circuit~$C$. The resource state is input-independent: It can be measured to produce a sample from $C\ket{\textbf{x}}$ for \textit{any} $\textbf{x} \in \{0,1\}^n$, chosen after the resource is prepared. The size of the resource state is exactly~$t$ qubits and the execution stage uses only single-qubit Pauli-$X$ and $Y$ measurements.
	
	Like previous works on computation-specific MBQC~\cite{RBB03, VPG+24, KH25, PG25}, we find that quantum resources scale exactly with non-Clifford circuit elements in the computation-specific setting. However, previous computation-specific work uses graph state resources, meaning that non-stabilizer measurements are required to compute. Our method loads non-stabilizerness into the resource, so that only Pauli measurements are needed, thus reducing resources at the execution step.
	
	\begin{figure}[t]
		\begin{adjustbox}{width=\linewidth}
			\tikzset{double_arrow/.style={-{Implies[]}, thick, double}
			}
			
			\begin{tikzpicture}
				\node(C){
					\begin{quantikz}[ampersand replacement=\&,wire types={q,n,q,n,n,q,q}]
						\& \gate[style={fill=burntornge!20,draw=burntornge}]{T} \& \gate[3, style={fill=blue(ncs)!20, draw=blue(ncs), rounded corners}]{Q_1} \& \push{~\dots~} \& \gate[3, style={fill=blue(ncs)!20, draw=blue(ncs), rounded corners}]{Q_{t^\prime}} \& \setwiretype{n} \& \& \& \\
						\& \midstick[label style={yshift=3pt}]{\vdots} \& \& \midstick[label style={yshift=3pt}]{\vdots} \& \& \& \& \& \\
						\& \gate[style={fill=burntornge!20,draw=burntornge}]{T} \& \& \push{~\dots~} \& \& \setwiretype{n} \& \& \& \& \\
						\& \midstick[label style={yshift=-49pt}]{\vdots} \& \& \& \& \& \& \& \& \\
						\& \& \& \& \& \& \& \& \\
						\& \& \& \& \& \& \gate[{style={fill=blue(ncs)!20,draw=blue(ncs),rounded corners}}]{W_1} \& \setwiretype{n} \midstick[label style={xshift=-3pt,yshift=-11pt}]{$\ddots$} \& \& \\
						\& \& \& \& \& \& \& \& \gate[{style={fill=blue(ncs)!20,draw=blue(ncs),rounded corners}}]{W_t} 
					\end{quantikz}
				};
				
				\draw[line cap=round, dashed, very thick]{([xshift=-12pt,yshift=-12pt]\tikzcdmatrixname-5-1.center) -- ([xshift=-6pt,yshift=-12pt]\tikzcdmatrixname-5-6.center) -- ([xshift=-8pt,yshift=10pt]\tikzcdmatrixname-3-7.center) -- ([xshift=14pt,yshift=10pt]\tikzcdmatrixname-3-10.center)} node [label = {[xshift=-22pt]above:{\textsf{\textit{preparation space}}}}] [label = {[xshift=-22pt]below:{\textsf{\textit{execution space}}}}] {};
				
				\draw[ao(english), fill=ao(english)!20, thick] ([xshift=-8pt,yshift=-22pt]\tikzcdmatrixname-3-3) rectangle ([xshift=13pt,yshift=-40pt]\tikzcdmatrixname-3-5) node[midway, yshift=-0.5pt, style={black}] {$\mathsf{PBC.Sel}(C)$};
				
				\draw[carnelian, fill=carnelian!20, thick] ([xshift=-9pt,yshift=-22pt]\tikzcdmatrixname-3-7) rectangle ([xshift=12pt,yshift=-40pt]\tikzcdmatrixname-3-9) node[midway, yshift=-0.5pt, style={black}] {$\mathsf{Execute}(\textbf{x})$};
				
				\draw[double_arrow]
				([xshift=-4pt,yshift=0pt]\tikzcdmatrixname-3-3.south east) -- ([xshift=-4pt,yshift=-13pt]\tikzcdmatrixname-3-3.south east) {};
				
				\draw[double_arrow]
				([xshift=-5pt,yshift=0pt]\tikzcdmatrixname-3-5.south east) -- ([xshift=-5pt,yshift=-13pt]\tikzcdmatrixname-3-5.south east) {};
				
				\draw[double_arrow]
				{([xshift=8pt,yshift=-13pt]\tikzcdmatrixname-3-3.south west) -- ([xshift=8pt,yshift=0pt]\tikzcdmatrixname-3-3.south west)} {};
				
				\draw[double_arrow]
				{([xshift=9pt,yshift=-13pt]\tikzcdmatrixname-3-5.south west) -- ([xshift=9pt,yshift=0pt]\tikzcdmatrixname-3-5.south west)} {};
				
				\draw[double_arrow]
				([xshift=-2pt,yshift=-4pt]\tikzcdmatrixname-5-7.south east) -- ([xshift=-2pt,yshift=-14pt]\tikzcdmatrixname-5-7.south east) {};
				
				\draw[double_arrow]
				([xshift=-2pt,yshift=18pt]\tikzcdmatrixname-6-9.south east) -- ([xshift=-2pt,yshift=-21pt]\tikzcdmatrixname-6-9.south east) {};
				
				\draw[double_arrow]
				{([xshift=6pt,yshift=-14pt]\tikzcdmatrixname-5-7.south west) -- ([xshift=6pt,yshift=-4pt]\tikzcdmatrixname-5-7.south west)} {};
				
				\draw[double_arrow]
				{([xshift=6pt,yshift=-21pt]\tikzcdmatrixname-6-9.south west) -- ([xshift=6pt,yshift=18pt]\tikzcdmatrixname-6-9.south west)} {};
				
				\draw[double_arrow] {([xshift=14pt,yshift=-8pt]\tikzcdmatrixname-4-5.center) -- ([xshift=41pt,yshift=-8pt]\tikzcdmatrixname-4-5.center)} {};
				
				\draw[double_arrow]
				{([xshift=14pt,yshift=-30pt]\tikzcdmatrixname-3-9.center) -- ([xshift=24pt,yshift=-30pt]\tikzcdmatrixname-3-9.center)} node[xshift=-4pt,label = right:{$\sim C\ket{\textbf{x}}$}] {};
				
				\draw[line cap=round, thick]{([xshift=1.5pt,yshift=0pt] \tikzcdmatrixname-1-1.center) -- ([xshift=-10pt,yshift=14pt] \tikzcdmatrixname-4-1.center) -- ([xshift=1.5pt,yshift=0pt] \tikzcdmatrixname-6-1.center)};
				
				\draw[line cap=round, thick]{([xshift=1.5pt,yshift=0pt] \tikzcdmatrixname-3-1.center) -- ([xshift=-10pt,yshift=-6pt] \tikzcdmatrixname-5-1.center) -- ([xshift=1.5pt,yshift=0pt] \tikzcdmatrixname-7-1.center)};
				
				\draw[very thick] {([xshift=-7.5pt,yshift=-11pt]\tikzcdmatrixname-7-1.center) -- ([xshift=-7.5pt,yshift=-16pt]\tikzcdmatrixname-7-1.center) -- ([xshift=-2pt,yshift=-16pt]\tikzcdmatrixname-7-6.center) node [midway, below] {\textsf{i. preparation}} -- ([xshift=-2pt,yshift=-11pt]\tikzcdmatrixname-7-6.center)} {};
				
				\draw[very thick] {([xshift=2pt,yshift=-11pt]\tikzcdmatrixname-7-6.center) -- ([xshift=2pt,yshift=-16pt]\tikzcdmatrixname-7-6.center) -- ([xshift=56pt,yshift=-16pt]\tikzcdmatrixname-7-9.center) node [midway, below] {\textsf{ii. execution}} -- ([xshift=56pt,yshift=-11pt]\tikzcdmatrixname-7-9.center)} {};
				
				\node at ([xshift=8pt,yshift=-50pt]C.north west) {
					\begin{quantikz}[ampersand replacement=\&,wire types={n,n,n}, row sep=21pt]
						\lstick[3]{$t$} \& \& \& \\
						\& \& \& \\
						\& \& \& \\
				\end{quantikz} };
				
				\node at ([xshift=8pt,yshift=-66pt]C.west) {
					\begin{quantikz}[ampersand replacement=\&,wire types={n,n}, row sep=26pt]
						\lstick[2]{$t$} \& \& \& \\
						\& \& \& \\
				\end{quantikz} };
				
			\end{tikzpicture}
		\end{adjustbox}
		
		\caption{Tailored resource computation (TRC). (i) Preparation: PBC measurements for $C\ket{\textbf{0}}$ are performed on~$t$ half-EPR pairs rotated by~$T$. The resulting state on the other $t$-qubit system is a computation-specific, input-independent resource state for $C$. (ii) Execution: The resource may be measured adaptively to sample from $C\ket{\textbf{x}}$ for any $\textbf{x} \in \{0,1\}^n$. Only single-qubit measurements in two Pauli bases are required for execution: for all~$i$, $W_i \in \{X,Y\}$.
		}
		\label{fig:resource}
	\end{figure}

	Our work also relates to hypergraph computation models~\cite{MM16, GGM19, TMH19} and other generalizations of graph state computation~\cite{KvdW19}, where non-Cliffordness is similarly located in the resource state to simplify execution measurements to Pauli bases only. To our knowledge, however, such models have not been considered in a computation-specific context. Instead, they aim for universality, and consequently produce large resource states which scale in~$n$ and the circuit depth~$d$, usually with substantial overhead; for example, one model which (like ours) requires only two Pauli measurement bases for execution employs a resource state of $d(2n+63)\binom{n}{3}-n$ qubits~\cite{TMH19}. We match the simplicity of their measurement set while requiring a greatly reduced resource state of size of just~$t$.
	
	TRC thus bridges a theoretical gap between computation-specific graph state models and universal hypergraph models. We recover the best of both worlds, capturing the resource size reductions of the former and the measurement simplicity of the latter, thus easing the resource cost for the execution step. Because of this simplicity and growing interest in computation-specific MBQC in general~\cite{FDB+21}, we believe that, in addition to its conceptual interest, our model has significant practical implications.
	
	We show the correctness of TRC by a derivation from Blind PBC, inspired by the security proof of \cite{Bro15}. First, in the preparation stage of Blind PBC, suppose the client prepares~$t$ EPR pairs, sends one half of each to the server, draws random bits $d_i$, $i \in [t]$, measures the other half in the~$X$ (resp.~$Y$) basis if $d_i=0$ (resp.~$d_i=1$), and records the outcome as $c_i$. The state on each of the server's half-pairs is now exactly $S^{d_i}Z^{c_i}\ket{+}$, so this preparation is equivalent to that of Blind PBC, and if the client continues as in Blind PBC, the outcome will be a sample from~$C\ket{\textbf{x}}$. Next, suppose the client delays the measurements on her half-pairs. That is, she sends generic half-EPR pairs to the server, who performs a~$T$ rotation on each, and measurements are selected according to~$\mathsf{PBC.Sel}(C)$. Since $C = C_\textbf{x}^*$ when $s_1 = ... = s_t = 0$, this can be seen as the client setting all~$s_i$ values to~$0$. Then, after the measurements are complete, the client chooses~$\textbf{x}$ freely and computes~$d_i$ values to satisfy $s_i = \textbf{u}_\ell \oplus d_i$ according to the~$\textbf{u}$ given by her $\textbf{x}$. The client performs single-qubit measurements, $X$ or~$Y$ on qubit~$i$ (if $d_i = 0$ or $1$, respectively), recording each outcome as~$c_i$. Note that the client must compute the~$d_i$ values in order, from~$i=1$ to~$n$, since~$d_i$ may depend on~$c_j$ for $j<i$. This procedure is identical to the previous one, except that the client's measurements are delayed, which is valid since measurements on separate subsystems commute. If the final correction is performed as in Blind PBC, the outcome is therefore again a sample from~$C\ket{\textbf{x}}$. We have now arrived at TRC from Blind PBC; we conclude by dividing the procedure into preparation and execution stages and tasks as shown in \cref{fig:resource}.

	\textit{Outlook.---}Our Blind PBC protocol advances BQC towards resource-optimality by reducing the size of the server's quantum processor. Conceptually, our results show that non-Clifford circuit elements alone pose the relevant limit on register size in the blind setting, independently of input size. Our protocol for tailored resource computation (TRC) exploits a duality between input-blindness and input-independence, and can be understood as pushing the limits of precomputation of a quantum circuit, in that it minimizes the resources needed for its execution. We may ask: Can these resources be reduced further, and where do their fundamental limits lie? On a practical level, Blind PBC may enable near-term experimental demonstration of BQC at greater scale, while TRC reduces the resource barriers for the execution of pre-chosen quantum circuits. Our work thus has implications for the future of real-world quantum computing, suggesting that in a world of constrained quantum resources, useful applications may be available sooner than anticipated.
	
	\emph{Acknowledgments.---}
	A.B., D.L., and J.N. acknowledge the support of the Natural Sciences and Engineering Research Council of Canada (NSERC) (ALLRP 569582-21), and of the Canada Research Chairs Program (CRC-2023-00173).
	F.C.R.P. acknowledges support from Ayuda Consolidación CNS2023-145392 (MICIU\slash AEI\slash 10.13039\slash 501100011033, NextGenerationEU\slash PRTR) and the European Research Council (ERC) via the Starting grant q-shadows (101117138). S.I. acknowledges support from Nieders{\"{a}}chsisches Ministerium f{\"{u}}r Wissenschaft und Kultur. All authors acknowledge funding from EU HORIZON RIA FoQaCiA GA 101070558.

	\newpage
	
	\clearpage
	
	\onecolumngrid
	\begin{center}
		\noindent\textbf{Supplemental Material for: \emph{Blind Quantum Computation with a Small Quantum Server}}
	\end{center}
	
	\setcounter{equation}{0}
	\setcounter{figure}{0}
	\setcounter{table}{0}
	\setcounter{theorem}{0}
	\makeatletter
	\renewcommand{\thetable}{S\arabic{table}}
	\renewcommand{\theequation}{S\arabic{equation}}
	\renewcommand{\thefigure}{S\arabic{figure}}
	\renewcommand{\bibnumfmt}[1]{[S#1]}
	\renewcommand{\thetheorem}{S\arabic{theorem}}
	\renewcommand{\thelemma}{S\arabic{lemma}}
	\renewcommand*{\thesection}{\Alph{section}}
	\renewcommand*{\thesubsection}{\arabic{subsection}}
	\renewcommand*{\thesubsubsection}{\alph{subsubsection}}

	\twocolumngrid

	Here, we formalize and prove the security of the main protocol Blind PBC. Our security requirements for Blind PBC are formalized using a \emph{simulator-based} definition of security, in which the server attacks some idealized (and easy-to-study) protocol with the property that any attack against the real protocol can also be conducted against the idealized protocol (see \cite{L17} for an overview of the simulation proof technique). 
	
	\begin{defn}\label{FormalSimulatorDefn} Let $\mathbf{P}$ denote an interactive quantum protocol to be executed between a client and a quantum server on input a classical bit string. We define an \emph{initialization for $\mathbf{P}$} to be a pair $(X, Y)$ where $X$ and $Y$ are algorithms corresponding respectively to the client and the server of the protocol (with $Y$ possibly deviating from the protocol), where $X$ and $Y$ have access to respective Hilbert spaces $\mathcal{A}_X$ and $\mathcal{B}_Y$\footnote{We use the letters $\mathcal{A}$ and $\mathcal{B}$ to emphasize the difference between the client and server, often respectively referred to as Alice and Bob in cryptographic settings. In particular, the client has access to the quantum registers $\mathcal{A}_X$ but does not necessarily have any quantum abilities beyond the ability to prepare and send single-qubit states, whereas the server can entangle his registers with those of $\mathcal{A}_X$.} and $\delta\in\{0,1\}^*$ is an initial input accepted by the protocol. Given such a $\delta$, the tuple $(X,Y,\delta)$ induces a quantum channel $\Phi_{\delta}:L(\mathcal{A}_X\otimes\mathcal{B}_Y)\mapsto L(\mathcal{Z})$, where $\mathcal{Z}$ is the output space of the channel. Furthermore, given any quantum computer $Y$ with input registers $\mathcal{B}_Y$ and output registers $\mathcal{Z}$, we define a $Y$-\emph{simulator} $\mathscr{S}_{Y}$ to be an algorithm which takes as input a $\delta\in\{0,1\}^*$ and outputs a quantum channel $\Sigma_{\delta}:L(\mathcal{B}_Y)\mapsto L(\mathcal{Z})$ (where this channel depends on $\delta$). We call $\Sigma_{\delta}$ the channel \emph{induced} by $\mathscr{S}_{Y}$ on input $\delta$. \end{defn}

	\Cref{FormalSimulatorDefn} allows us to make the notion of privacy precise. Closely adhering to the security definition of~\cite{Bro15}, we define the following:
	
	\begin{defn}\label{EpsilonSecurity} Let $\mathbf{P}$ be a delegated quantum protocol. We say that $\mathbf{P}$ is \emph{perfectly secure} if, for any $\mathbf{P}$-initialization $(X,Y)$ (with $Y$ possibly deviating) there exists a $Y$-simulator $\mathscr{S}_{Y}$ such that, for every classical input $\delta\in\{0,1\}^*$ accepted as input by $\mathbf{P}$, we have $||\Phi_{\delta}-\Sigma'_{\delta}||_{\diamond}=0$, where $||\cdot||_{\diamond}$ denotes the diamond norm and:
		\begin{enumerate}[label=\arabic*)]
			\item $\Phi_{\delta}:L(\mathcal{A}_X\otimes\mathcal{B}_Y)\mapsto L(\mathcal{Z})$ is as in \Cref{FormalSimulatorDefn}; AND
			\item $\Sigma'_{\delta}:L(\mathcal{A}_X\otimes\mathcal{B}_Y)\mapsto L(\mathcal{Z})$ is a channel consisting of first tracing out the $\mathcal{A}_X$-system and then applying the channel $\Sigma_{\delta}$ induced by $\mathscr{S}_{Y}$ on input $\delta$.
	\end{enumerate} \end{defn}
	
	Informally, \Cref{EpsilonSecurity} states that the server can learn no more (via his interaction with the client) about the client's hidden input than if he never had any interaction with the client at all. Thus, our remaining task is to prove the following.
	
	\begin{theorem}\label{MainSecurityThm} Blind PBC is perfectly secure. \end{theorem}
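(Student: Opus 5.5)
We prove \cref{MainSecurityThm} by constructing, for an arbitrary (possibly deviating) server $Y$, a simulator $\mathscr{S}_Y$ that runs $Y$ against a fake client. The fake client never uses the input $\mathbf{x}$ (here $\delta=\mathbf{x}$). We then show that the joint view of $Y$ is identically distributed in the real and simulated interactions.

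\textbf{Step 1: the server's view.} The server receives two things:
\begin{itemize}
\item the $t$-qubit state $\bigotimes_i S^{d_i}Z^{c_i}\ket{+}$;
\item the sequence of Pauli requests $Q^*_1,\dots,Q^*_{t'}$ produced by $\mathsf{PBC.Sel}(C^*_\mathbf{x})$, each adaptively computed from the outcomes $b^*_j$ he returns.
\end{itemize}
The final correction $\mathbf{u}^*$ and the output $\mathbf{y}\oplus\mathbf{u}^*$ remain with the client. So the only $\mathbf{x}$-dependence visible to the server enters through $C^*_\mathbf{x}$. By \cref{alg:bpbc}, $C^*_\mathbf{x}$ equals $C$ with $S^{s_i}$ inserted after the $i$-th $T$ gate, where $s_i=\mathbf{u}_\ell\oplus d_i$. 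Since $C$ is public, $C^*_\mathbf{x}$ is a deterministic function of $\mathbf{s}=(s_1,\dots,s_t)$. Hence the whole interaction is a fixed function of three ingredients: the quantum state sent, the string $\mathbf{s}$, and the server's own strategy.

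\textbf{Step 2: the simulator.} $\mathscr{S}_Y$ runs in two stages.
\begin{itemize}
\item It prepares $t$ EPR pairs and sends one half of each to $Y$, keeping the other halves in a private register. (This is the same trick used to derive TRC.)
\item It samples $\mathbf{s}\in\{0,1\}^t$ uniformly, builds $C$ with $S^{s_i}$ inserted after each $T$ gate, and runs $\mathsf{PBC.Sel}$ on this circuit interactively with $Y$, feeding it whatever outcomes $Y$ returns.
\end{itemize}
Finally it traces out its private register and outputs $Y$'s output registers. None of this depends on $\mathbf{x}$, so it defines a valid channel $\Sigma_\delta$ (in fact independent of $\delta$).

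\textbf{Step 3: equivalence of the real and simulated views.} This is the main step, and the obstacle is the adaptivity. In the real protocol, $d_i$ serves two roles: it is the key for the quantum state sent up front, and it masks $s_i$. In addition, $\mathbf{u}_\ell$ at the $i$-th $T$ gate depends on earlier $c_j,d_j$ and on the outcomes $b_j$ through the $\mathbf{v}$ update. So the masked bits are not a fixed one-time pad.

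The plan is to rewrite the real protocol in an equivalent entanglement-based form, as in the TRC derivation:
\begin{itemize}
\item The client sends halves of EPR pairs.
\item At the $i$-th $T$ gate, she sets $d_i=\mathbf{u}_\ell\oplus s_i$ for a uniformly fresh $s_i$. Because $\mathbf{u}_\ell$ is already determined at that point, uniform $d_i$ is equivalent to uniform $s_i$.
\item She measures her $i$-th half in the $X$ or $Y$ basis according to $d_i$, obtaining $c_i$.
\end{itemize}
Her measurements act on a separate subsystem, so they commute with everything the server does. They can therefore be delayed until after the interaction, which makes the server's reduced state identical to that of the honest preparation. Once all client measurements are pushed to the end, the messages the server sees are generated from uniformly random, independent $s_i$ and the public $C$ alone. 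Tracing out the client's register makes this exactly $\Sigma'_\delta$.

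The points to verify carefully are the following:
\begin{itemize}
\item Reparametrizing from $d_i$ to $s_i$ preserves the joint distribution, even though $c_i$ enters later $\mathbf{u}$ and $\mathbf{v}$ updates. It does, because $c_i$ affects only $\mathbf{v}$, the client's private bookkeeping, and never $\mathbf{s}$ beyond what is already masked.
\item The equality holds at the channel level, with arbitrary purification registers attached to $\mathcal{B}_Y$. It does, since every step is an exact identity of CPTP maps.
\end{itemize}
Together these give $\|\Phi_\delta-\Sigma'_\delta\|_\diamond=0$.
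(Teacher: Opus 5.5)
Your proof is correct, but it takes a genuinely different route from the paper's.

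\textbf{The paper's route.} The paper never purifies the key. It introduces an auxiliary protocol in which the client additionally announces each $s_i$. It shows, via an ``AM translator'', that Blind PBC leaks no more than that protocol (Lemma~\ref{lemma:AuxToMainPerfSec}). It then proves by induction on $k$ that the key-averaged state $\rho_k(\mathbf{c},\mathbf{d}\vert\mathbf{s}_k)$ of everything sent so far is maximally mixed (Lemma~\ref{MaximallyMixedLemm}). That induction uses only $\tau_{00}+\tau_{10}=\tau_{01}+\tau_{11}=I_2$ and the form $s_k=d_k\oplus f_k$ from Fact~\ref{TBooleanFunctions}. The simulator then runs $\Phi_\delta$ on a maximally mixed client register.

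\textbf{Your route.} You fold the auxiliary-protocol reduction into one observation: every message is a function of $\mathbf{s}$, the server's replies and the client's private coins. You then use the entanglement-based rewriting of~\cite{Bro15}, which the paper uses only to derive TRC, so that the server's view is literally generated without $\mathbf{x}$.

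\textbf{Trade-offs.} The paper's route is entanglement-free, and the paper explicitly advertises this as a simplification over~\cite{Bro15}. Yours costs the purification step, but it makes clear why adaptivity is harmless. The outcomes $m_{<k}$ fed into $f_k$ come from a possibly deviating server that may already have acted on qubit $k$, so they can be correlated with $(c_k,d_k)$. The paper's induction passes over this by treating $f_k$ as a fixed bit. Your version also gives the diamond-norm statement with arbitrary purifications directly.

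\textbf{Two small corrections.}

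First, in Definition~\ref{EpsilonSecurity}, $\delta$ is the shared public input (the circuit). The secret $\mathbf{x}$ sits in the client register $\mathcal{A}_X$, which $\Sigma'_\delta$ traces out. With $\delta=\mathbf{x}$ the definition would be vacuous, since the simulator could just emulate the honest client. Your simulator uses only $C$, so nothing breaks.

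Second, your stated reason for the reparametrization, that ``$c_i$ affects only $\mathbf{v}$,'' is not quite right. Under Hadamard conjugation, $\mathbf{v}$ feeds into later $\mathbf{u}$, and hence into later $s_j$ in the real protocol. The correct reason is that $s_i$ does not involve $c_i$. So in the purified picture the client's measurement on half $i$, and the choice of $d_i$, can be placed after $s_i$ is formed. At that point $d_i$ has influenced nothing, and $d_i\leftrightarrow s_i$ is a bijective relabelling. Once all $s_j$ are fresh, nothing sent to the server depends on any $c_j$. The deferred measurements can then be performed after the interaction, in the order $1,\dots,t$, since $d_i$ depends on $c_{<i}$.
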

	
	To prove \Cref{MainSecurityThm}, we first introduce an auxiliary form of Blind PBC. The auxiliary protocol, denoted \ref{Prot:HiddenCompBasisAux}, is specified in the box below. The difference between Blind PBC and \ref{Prot:HiddenCompBasisAux} occurs after the client has already generated the classical circuit $C_{\mathbf{x}}^*$ obtained in \Cref{alg:bpbc} in the main text. In the auxiliary protocol, the client sends the server some extra classical information at each step after this initial conversion. Informally, in \ref{Prot:HiddenCompBasisAux}, the client is revealing to the server the initial adaptive circuit piece-by-piece, in a way that (as we prove below) does not reveal $\mathbf{x}$. In Blind PBC, at each step, the client computes the next Pauli and tells the server to measure in that basis, whereas in \ref{Prot:HiddenCompBasisAux}, the client and server each locally compute the Pauli determining the next measurement basis. 
	
	\begin{tcolorbox}[breakable, before upper={\setlength{\parskip}{0.5em}}, title=\customlabel{Auxiliary BPBC}{Prot:HiddenCompBasisAux}: (Auxiliary Form of Blind PBC for the proof of security), fonttitle=\bfseries, colbacktitle=teal!70!gray]
		
		\textbf{Input:} A pair $(C,\mathbf{x})$, where $C$ is a nonadaptive quantum circuit with $n$ qubits, $t$ $T$-gates, and $w$ readout measurements, and $\mathbf{x}\in\{0,1\}^n$ is a bit string which the client wishes to keep secret.
		
		\textbf{Output:} A sample from the classical output distribution of $C\lvert\mathbf{x}\rangle$ over $\{0,1\}^w$.
		
		\textbf{Protocol Description:}
		\vspace*{-2mm}
		\begin{enumerate}[label=\arabic*)]
			\item As in Blind PBC, the client draws a sequence of $2t$ bits $(\mathbf{c}, \mathbf{d})=(c_1, \cdots, c_t, d_1, \cdots, d_t)$ uniformly at random and prepares initial $t$-qubit density matrix $$\rho_0=\bigotimes_{i=1}^t S^{c_i}Z^{d_i}|T\rangle\langle T|Z^{d_i}(S^{c_i})^{\dagger}$$
			\item The client applies \Cref{alg:bpbc} on input $C,\mathbf{x}, \mathbf{c}, \mathbf{d}$ to produce classical circuit $C^*_{\mathbf{x}}$. 
			\item\label{enum:AuxProt3} We perform the following modified version of the $t$-step procedure specified in stage ii. of \Cref{fig:pbc}(b) (with $\rho_0$ being our initial input as in stage i. of \Cref{fig:pbc}(b)). In particular, during step $k$, where $0\leq k<t$, the client sends to the server a single bit $s_{k+1}$, and so, at the start of step $k$, the server has access to $s_1, \cdots, s_k$: Suppose we are at the start of step $0\leq k<t$. Each of the client and server can now locally determine the Pauli $P$ specifying the next measurement basis. In particular, the client, who has access to $C_{\mathbf{x}}^*$, can do this by applying $\textsf{PBC.Sel}$ to $C_{\mathbf{x}}^*$ and her previously obtained measurement outcomes, whereas the server can obtain $Q_{k+1}$ by using $s_1, \cdots, s_k$ as specified below (note that the server does not need anything in order to obtain the next Pauli in the $k=0$ step). The server performs the corresponding nondestructive measurement and sends the outcome $b_{k+1}$ to the client. Once the server and client both have $b_{k+1}$, the client uses this to compute the exponent $s_{k+1}\in\{0,1\}$ on the $S$ gate that comes immediately after the $(k+1)$th $T$-gate in time. The client then sends this value $s_{k+1}$ to the server.
			\item When all $t$ steps of \ref{enum:AuxProt3} are done,  the interaction with the server is finished and the client has a length-$(t+w)$ bit string whose final $w$ bits she corrects with $\mathbf{u}^*$ as obtained in \Cref{alg:bpbc}.
		\end{enumerate}
		
	\end{tcolorbox}
	
	For $0\leq k<t$, we can define a $(t+k)$-qubit density matrix which encodes all the information that the client has sent to the server up to the start of step $k$ of \ref{enum:AuxProt3} of \ref{Prot:HiddenCompBasisAux}, namely, the density matrix $\rho_k(\mathbf{c}, \mathbf{d}\vert \mathbf{s})$ defined as
	\begin{equation}\label{rhokfromcds}\left(\bigotimes_{i=1}^tS^{c_i}Z^{d_i}|T\rangle\langle T|Z^{d_i}\left(S^{c_i}\right)^{\dagger}\right)\otimes\left(\bigotimes_{i=1}^k|s_i\rangle\langle s_i|\right)\end{equation}
	where $\mathbf{s}\coloneqq(s_1, \cdots, s_t)$. Note that the separable $\rho_k(\mathbf{c}, \mathbf{d}\vert\mathbf{s})$ should be distinguished from the $t$-qubit density matrix $\tilde{\rho}_k$ that the server has obtained by starting with $\rho_0(\mathbf{c}, \mathbf{d})$ and performing the sequence of measurements specified by the protocol (where $\tilde{\rho}_k$, in general, corresponds to some arbitrary distribution of entangled $t$-qubit states). The purpose of the density matrix $\rho_k\left(\mathbf{c}, \mathbf{d}\vert\mathbf{s}\right)$ is to capture all of the quantum and classical information that the client has sent to the server so far (from which the server, if he chooses to follow the protocol honestly, can indeed produce $\tilde{\rho}_k$ by performing the prescribed sequence of quantum measurements and then tracing out the last $k$ qubits). From the client's perspective, Blind PBC is at least as secure as \ref{Prot:HiddenCompBasisAux}, since we are only giving the server more information than is passed to \ref{Prot:HiddenCompBasisAux}. We prove this formally below. 
	
	\begin{lemma}\label{lemma:AuxToMainPerfSec} If \ref{Prot:HiddenCompBasisAux} is perfectly secure, then Blind PBC is also perfectly secure. \end{lemma}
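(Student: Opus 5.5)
The plan is a reduction argument. Any deviating server $Y$ attacking Blind PBC can be turned into a deviating server $Y'$ attacking \ref{Prot:HiddenCompBasisAux}, with exactly the same channel on $\mathcal{A}_X\otimes\mathcal{B}_Y$. The simulator guaranteed for $Y'$ by the perfect security of \ref{Prot:HiddenCompBasisAux} is then reused, essentially unchanged, as the simulator for $Y$.

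The key observation is that everything the client sends the server in Blind PBC is a function of what the server holds in \ref{Prot:HiddenCompBasisAux}. In Blind PBC the client sends $\rho_0$, followed by the sequence of Pauli descriptions $Q^*_1,Q^*_2,\dots$. Each $Q^*_{k+1}$ is the output of $\textsf{PBC.Sel}$ run on $C^*_{\mathbf{x}}$ and the outcomes $b_1,\dots,b_k$. By \Cref{alg:bpbc}, $C^*_{\mathbf{x}}$ differs from the public circuit $C$ only by the gates $S^{s_i}$ inserted right after the $i$-th $T$ gate. Moreover, $\textsf{PBC.Sel}$ only needs the portion of the circuit up to the point currently being processed. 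Hence $Q^*_{k+1}$ is a deterministic classical function $F_k(C,s_1,\dots,s_k,b_1,\dots,b_k)$. This is exactly the local computation the server performs in step \ref{enum:AuxProt3} of \ref{Prot:HiddenCompBasisAux}, where the server already has $s_1,\dots,s_k$ and the returned bits $b_j$.

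Given $Y$, I define $Y'$ as follows. It receives $\rho_0$ and runs $Y$ on it. At step $k$ it computes $Q^*_{k+1}=F_k(\cdot)$ from its own record of the $s_j$ and $b_j$, and feeds $Q^*_{k+1}$ to $Y$ as the client's message. It forwards $Y$'s reply $b_{k+1}$ to the client. Finally, it outputs whatever $Y$ outputs. The client's behaviour, including her final correction by $\mathbf{u}^*$, is identical in the two protocols given the same transcript of $b$'s. Therefore the joint channel $\Phi'_{\mathbf{x}}$ induced by $(X',Y',\mathbf{x})$ equals $\Phi_{\mathbf{x}}$ induced by $(X,Y,\mathbf{x})$. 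Here I take the output space $\mathcal{Z}$ to be the server's output register, and I identify $\mathcal{B}_{Y'}=\mathcal{B}_Y$, since $Y'$ only adds classical workspace that can be traced out. Perfect security of the auxiliary protocol yields $\mathscr{S}_{Y'}$ with $\|\Phi'_{\mathbf{x}}-\Sigma'_{\mathbf{x}}\|_\diamond=0$. Setting $\mathscr{S}_Y:=\mathscr{S}_{Y'}$ then gives $\|\Phi_{\mathbf{x}}-\Sigma'_{\mathbf{x}}\|_\diamond=0$, which is \Cref{EpsilonSecurity} for Blind PBC.

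The main obstacle is justifying that $Q^*_{k+1}$ really depends only on $s_1,\dots,s_k$, and not on later $s_i$ or directly on $\mathbf{x},\mathbf{c},\mathbf{d}$. This requires two facts: $\textsf{PBC.Sel}$ processes $T$ gates in temporal order, and the $S^{s_i}$ corrections to the right of the current point never affect earlier measurement choices. I would argue this by induction on $k$, using the structure of \Cref{alg:bpbc} (line computing $s_i$ and line appending $G_kS^{s_i}$) together with the sequential nature of $\textsf{PBC.Sel}$. A second, smaller point needs care. In \ref{Prot:HiddenCompBasisAux} the client's dependence on $b_{k+1}$ through $s_{k+1}$ (via $\overline{b_i}$ in \Cref{alg:bpbc}) must match Blind PBC exactly, so that the two client algorithms coincide as channels and the identification $\Phi'_{\mathbf{x}}=\Phi_{\mathbf{x}}$ is literal. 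I would also check that the extra classical bits sent in the auxiliary protocol can be discarded by $Y'$ at the end without changing its output register.
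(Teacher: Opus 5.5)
This is the paper's argument: your $Y'$ is exactly the paper's AM translator composed with the Blind PBC server, $T^{am}\leftrightharpoons Y^m$, and the simulator for \ref{Prot:HiddenCompBasisAux} is reused verbatim because the induced channels are equal; the paper does not prove your ``main obstacle'' (that the next Pauli depends only on $C$, $s_1,\dots,s_k$ and the returned outcomes) but simply builds it into the definition of \ref{Prot:HiddenCompBasisAux}. One correction: the channel index $\delta$ in \Cref{EpsilonSecurity} is the shared public input (the paper uses $\eta(C)$), not the secret $\mathbf{x}$, since a simulator given $\mathbf{x}$ would make the definition vacuous, and your transfer argument goes through unchanged with $\eta(C)$ in place of $\mathbf{x}$.
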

	
	\begin{proof} Suppose \ref{Prot:HiddenCompBasisAux} is perfectly secure. For any initial shared input circuit $C$ to \ref{Prot:HiddenCompBasisAux} (or to the main protocol Blind PBC, which accepts the same input shared between client and server), we write $\eta(C)$ to mean a bit string of length $\textnormal{poly}(|C|)$ encoding $C$. Now, we define a \emph{auxiliary-main translator} (or just \emph{AM translator}) to be a polynomial-time classical machine $T^{am}$ such that:
		
		\begin{enumerate}[label=T\arabic*)]
			\item\label{defT1} $T^{am}$ accepts as input any bit string of the form $\eta=\eta(C)$; AND
			\item\label{defT2} Given any such $\eta$ and any initialization $(X^m, Y^m)$ for Blind PBC and any initialization $(X^a, Y^a)$ for \ref{Prot:HiddenCompBasisAux}\footnote{The superscripts $m,a$ denote \emph{main} and \emph{auxiliary} respectively.}, the following hold:
			\begin{enumerate}[label=\roman*)]
				\item $T^{am}$ exchanges classical messages with each of $X^a$ and $Y^m$ 
				\item On input $\eta$, the actions of $(X^m, Y^m)$ and $(X^a\leftrightharpoons T^{am}, Y^m)$ are precisely the same, where $X^a\leftrightharpoons T^{am}$ denotes the joint system consisting of $X^a$ and $T^{am}$, regarded as one party to Blind PBC, i.e. they induce the same quantum channel. 
				\item Likewise, on input $\eta$, the actions of $(X^a, Y^a)$ and $(X^a, T^{am}\leftrightharpoons Y^m)$ also induce the same quantum channel, with notation as above, where $T^{am}\leftrightharpoons Y^m$ is regarded as one party to \ref{Prot:HiddenCompBasisAux}. 
			\end{enumerate}
		\end{enumerate}
		
		For the purpose of the definition above, we may assume that $X^a$ and $X^m$ have access to quantum register spaces of the same size (in particular, the interaction between $X^a$ and the classical machine $T^{am}$ has no effect on the size of the quantum register space of $X^a$) and likewise, $Y^a$ and $Y^m$ have access to respective quantum input and output register spaces of the same size. Informally, the purpose of $T^{am}$ is to interact with both $X^a$ and $Y^m$ (where $X^a$ and $Y^m$ otherwise do not interact with each other) such that:
		\begin{enumerate}[label=\roman*)]
			\itemsep-0.1em
			\item $T^{am}$ uses its interaction with $X^a$ to simulate for $Y^m$ an interaction with $X^m$ ; AND
			\item Simultaneously, $T^{am}$ uses its interaction with $Y^m$ to simulate for $X^a$ an interaction with $Y^a$.
		\end{enumerate}
		
		We claim now that if an AM translator exists, then \Cref{lemma:AuxToMainPerfSec} holds. Suppose there is such an AM translator $T^{am}$ and let $(\tilde{X}^m, \tilde{Y}^m)$ be an arbitrary initialization of Blind PBC, with the server $\tilde{Y}^m$ possibly deviating. Also let $\eta=\eta(C)$ be an initial shared input to Blind PBC as above. Let $\Phi_{\eta}:L(\mathcal{A}_{\tilde{X}^m}\otimes\mathcal{B}_{\tilde{Y}^m})\mapsto L(\mathcal{Z}^m)$ be the quantum channel induced by $(\tilde{X}^m, \tilde{Y}^m)$ on this input, where $\mathcal{Z}^m$ is the output space of $\tilde{Y}^m$. 
		
		Now suppose that on input $\eta$, we instead run the joint system $T^{am}\leftrightharpoons\tilde{Y}^m$ against some client $\tilde{X}^a$ that follows \ref{Prot:HiddenCompBasisAux}, i.e. $(\tilde{X}^a, T^{am}\leftrightharpoons\tilde{Y}^m)$ is an initialization for \ref{Prot:HiddenCompBasisAux}. We write $\tilde{Y}^a$ to denote the joint system $T^{am}\leftrightharpoons\tilde{Y}^m$ interacting with $\tilde{X}^a$ in \ref{Prot:HiddenCompBasisAux}. The interaction between $\tilde{X}^a$ and $\tilde{Y}^a$ on input $\eta$ induces a quantum channel $\Psi_{\eta}:L(\mathcal{A}_{\tilde{X}^a}\otimes\mathcal{B}_{\tilde{Y}^m})\mapsto L(\mathcal{Z}^m)$. In particular, $\tilde{Y}^m$ and $\tilde{Y}^a$ have the same quantum input registers (and the same output registers) and, by definition, we have
		\begin{equation}\label{DistEtaPsi}||\Phi_{\eta}-\Psi_{\eta}||_{\diamond}=0 \,.\end{equation}
		Our assumption on the security of \ref{Prot:HiddenCompBasisAux} implies that there exists a $\tilde{Y}^a$-simulator $\mathscr{S}_{\tilde{Y}^a}$ which, on input $\eta$, induces a channel $\Sigma_{\eta}:L(\mathcal{B}_{\tilde{Y}^m})\mapsto L(\mathcal{Z}^m)$ such that $||\Psi_{\eta}-\Sigma_{\eta}'||_{\diamond}=0$, where $\Sigma'_{\eta}:L(\mathcal{A}_{X^a}\otimes\mathcal{B}_{Y^m})\mapsto L(\mathcal{Z}^m)$ is as in \Cref{EpsilonSecurity}. Thus, by \cref{DistEtaPsi}, we have $||\Phi_{\eta}-\Sigma'_{\eta}||_{\diamond}=0$. It follows that $\mathscr{S}_{\tilde{Y}^a}$ also a $Y^m$-simulator. Thus, \ref{Prot:HiddenCompBasisAux} is indeed perfectly secure.
		
		To conclude the proof of \Cref{lemma:AuxToMainPerfSec}, we just need to construct an AM translator. We construct $T^{am}$ as follows: Suppose we have initial shared input $\eta=\eta(C)$ as above, and we take $(X^m, Y^m)$ and $(X^a, Y^a)$ as in \ref{defT2}, where client holds secret input $\mathbf{x}$. Now $T^{am}$ computes back-propagated Paulis specifying measurement bases and sequentially receives bits from the client from $(s_1, \cdots, s_t)$ as in \ref{enum:AuxProt3} of \ref{Prot:HiddenCompBasisAux}. Each time $T^{am}$ computes a back-propagated Pauli $P$, it sends a classical description of $P$ to $Y^m$, who performs the corresponding non-destructive measurement and sends a measurement outcome $\tilde{m}$ back to $T^{am}$ in return, which $T^{am}$ then sends to $X^a$, so that $X^a$ can use $\tilde{m}$ to compute the next bit in $(s_1, \cdots, s_t)$. They proceed in this way until $T^{am}$ has received all of $(s_1, \cdots, s_t)$, at which point there are, possibly, still more quantum measurements to be done by $Y^m$ (where these measurements are performed in Pauli bases that $Y^m$ can now determine independently). $Y^m$ performs the appropriate sequence of nondestructive measurements, sending the outcomes in sequence to $T^{am}$, who passes them in sequence along to $X^a$. \end{proof}
	
	Thus, to prove \Cref{MainSecurityThm}, it suffices to prove the security of \ref{Prot:HiddenCompBasisAux}. We first note that we can characterize the sequence of classical messages the server has received at each step in \ref{Prot:HiddenCompBasisAux}, where the Boolean function $f_i$ defined below is specified by the update rules in \Cref{alg:bpbc}.
	
	\begin{fact}\label{TBooleanFunctions} Let $(C,\mathbf{x})$ be an initial input to \ref{Prot:HiddenCompBasisAux}, with $n,t,w$ as specified and let $(\mathbf{c}, \mathbf{d})$ denote the client's initial $2t$-bit classical encryption key as applied to $|T\rangle^{\otimes t}$. Then, for each $1\leq i \leq t$, there exists a Boolean function $f_i:\{0,1\}^{n+3(i-1)}\mapsto\{0,1\}$, where $f_i$ depends only on $C,i$, such that $$s_i=d_i\oplus f_i(\mathbf{x}, c_1, \cdots, c_{i-1}, d_1, \cdots, d_{i-1}, m_1, \cdots, m_{i-1})\,,$$ where $m_1, \cdots, m_{i-1}$ are the gadget measurement outcomes received up to and including the end of step $i-1$.  \end{fact}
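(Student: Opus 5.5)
The plan is an induction along the gate sequence $G_1,\dots,G_m$ of \Cref{alg:bpbc}, tracking the Pauli-frame vectors $(\mathbf{u},\mathbf{v})$ as functions of the data available. By line 13 of \Cref{alg:bpbc}, the exponent at the $i$-th $T$ gate is $s_i=\mathbf{u}_\ell\oplus d_i$, where $\ell$ is the wire that gate acts on and $\mathbf{u}$ is evaluated just before the update at that gate. So it suffices to prove the following invariant. After processing all gates strictly before the $i$-th $T$ gate, each bit of $\mathbf{u}$ and $\mathbf{v}$ is a Boolean function, depending only on $C$ and the position in the circuit, of $\mathbf{x}$ together with $c_j,d_j,m_j$ for $j<i$. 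Here $m_j$ is the gadget outcome $b_j$ of \cref{fig:gadg}(b) at the $j$-th $T$ gate. Once this invariant holds, we set $f_i$ to be the $\ell$-th coordinate of $\mathbf{u}$ at that point. Its arity is $n+3(i-1)$, as the statement requires.

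The induction runs as follows.
\begin{enumerate}[label=\arabic*)]
\item \emph{Base case.} Initially $\mathbf{u}=\mathbf{x}$ and $\mathbf{v}=\mathbf{0}$, which are trivially functions of $\mathbf{x}$ alone.
\item \emph{Clifford gate $G_k\in\{H,S,\text{CNOT}\}$.} Lines 6--7 replace $(\mathbf{u},\mathbf{v})$ by the symplectic image under conjugation by $G_k$. This map is fixed by $C$ and $k$, and it is a linear map over $\mathbb{F}_2$ in the bits of $(\mathbf{u},\mathbf{v})$. For $S$, conjugation sends $X$ to $Y\propto XZ$, so $\mathbf{v}$ picks up $\mathbf{u}$ on that wire; phases are irrelevant for the frame. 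The composite is therefore again a Boolean function of the same variables, and no new variables enter.
\item \emph{The $j$-th $T$ gate.} Line 6 leaves $\mathbf{u}$ unchanged, since a $T$ gate does not alter the $X$-part of the frame (\cref{fig:gadg}(a)). Line 12 changes only $\mathbf{v}_\ell$, by XOR with $\mathbf{u}_\ell\oplus\mathbf{u}_\ell d_j\oplus d_j\overline{m_j}\oplus c_j$. This expression involves only previous frame bits and the new variables $c_j,d_j,m_j$.
\end{enumerate}
After the $(i-1)$-th $T$ gate, the frame therefore depends on $\mathbf{x}$ and on $c_j,d_j,m_j$ for $j\le i-1$. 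Gates between the $(i-1)$-th and the $i$-th $T$ gate are Clifford and introduce nothing further.

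A useful structural remark is that $\mathbf{u}$ never depends on any $c_j,d_j,m_j$. Only $\mathbf{v}$ is modified at $T$ gates, and $\mathbf{v}$ feeds back into $\mathbf{u}$ only through $H$. So $f_i$ can genuinely depend on all the listed arguments, but the dependence comes only via Hadamards that swap $\mathbf{v}$ into $\mathbf{u}$. In particular, $f_i$ does not depend on $d_i$ or $m_i$, which are not among its arguments. This is what gives the form $s_i=d_i\oplus f_i(\cdots)$ with $d_i$ entering only as the explicit XOR.

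I expect the main obstacle to be bookkeeping rather than mathematics. One must confirm that the outcome $m_j$ used in line 12 at the $j$-th $T$ gate is available by the end of step $j$ of \ref{enum:AuxProt3}. One must also check that it is legitimate to treat $m_j$ as a formal input, even though its distribution depends on the state. Here $m_j$ should be taken as the gadget outcome $b_j$ that \textsf{PBC.Sel} reconstructs from the Pauli outcomes, and the fact is purely a statement about the classical map. To settle both points, I will make explicit that $f_i$ is defined as a formal Boolean function of its arguments, independent of the randomness, and that it is determined by the gate list of $C$ up to the $i$-th $T$ gate.
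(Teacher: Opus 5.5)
Your proposal is correct and takes essentially the paper's approach. The paper offers no separate proof and simply asserts that $f_i$ is specified by the update rules of Algorithm~1; your induction on the frame $(\mathbf{u},\mathbf{v})$ spells that out: linear updates at Clifford gates, only $\mathbf{v}_\ell$ updated with $c_j,d_j,m_j$ at the $j$-th $T$ gate, and then $s_i=\mathbf{u}_\ell\oplus d_i$.

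One correction to the opening sentence of your ``structural remark'': the claim that $\mathbf{u}$ never depends on any $c_j,d_j,m_j$ is false, as your own next sentence about Hadamards shows. What is true is that $\mathbf{u}$ is never \emph{modified} at a $T$ gate, since lines 10--14 do not touch it and line 6 is not executed in that branch. That is exactly what keeps $c_i,d_i,m_i$ out of $f_i$.
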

	
	To prove the security of \ref{Prot:HiddenCompBasisAux}, we first prove, using \Cref{TBooleanFunctions}, that, in a run of \ref{Prot:HiddenCompBasisAux}, at each step of the protocol, the server's view of the density matrix specified in \Cref{rhokfromcds} is maximally mixed. 
	
	\begin{lemma}\label{MaximallyMixedLemm} Let $(C,\mathbf{x})$ be an arbitrary input to \ref{Prot:HiddenCompBasisAux}, with $n,t,w$ as specified. For each $0\leq i<t$, let $\mathbf{s}_i\coloneqq(s_1, \cdots, s_i)$ be the string of bits received by the server up to the start of step $i$ of \ref{enum:AuxProt3} of \ref{Prot:HiddenCompBasisAux}. Then, for any $0\leq k<t$, from the server's point of view, $\rho_k(\mathbf{c}, \mathbf{d}|\mathbf{s}_k)$ is maximally mixed. More formally, given $\mathbf{s}_k$, we have
		$$\frac{1}{2^{2t}}\sum_{(\mathbf{c}, \mathbf{d})\in\mathbb{F}_2^{2t}}\rho_k\left(\mathbf{c}, \mathbf{d}\vert \mathbf{s}_k\right)=\frac{1}{2^{t+k}}I_{2^{t+k}}\,,$$
		where $\rho_k(\mathbf{c}, \mathbf{d}\vert\mathbf{s}_k)$ is as in \Cref{rhokfromcds}. \end{lemma}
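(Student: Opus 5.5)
The plan is to evaluate the average over $(\mathbf{c},\mathbf{d})$ one key pair at a time, peeling off the \emph{latest} pair $(c_k,d_k)$ first, and to argue by downward induction on the index. By \Cref{TBooleanFunctions}, the register $|s_i\rangle\langle s_i|$ depends on the key only through $d_i$ and through $c_j,d_j$ with $j<i$ (the outcomes $m_j$ are treated as fixed, since we condition on the server's view). The qubits with index $i>k$ carry no classical register in $\rho_k$. So in $\rho_k(\mathbf{c},\mathbf{d}\vert\mathbf{s}_k)$ the pair $(c_i,d_i)$ with $i>k$ appears only in the $i$-th magic qubit. Likewise, $(c_k,d_k)$ appears only in the $k$-th magic qubit and in the register $|s_k\rangle\langle s_k|$. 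No other factor depends on these pairs.

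First I handle the qubits $i>k$. Averaging over both key bits applies the four rotations $I,S,Z,SZ$ (up to order) to $|T\rangle\langle T|$. These rotate the equatorial Bloch vector of $|T\rangle$ by $0,\pi/2,\pi,3\pi/2$, so the average is $I/2$. In fact the Pauli-$Z$ part alone already suffices: $\tfrac12\sum_{e}Z^{e}|T\rangle\langle T|Z^{e}=I/2$, because $|T\rangle$ has no $Z$-component. Any $S$-power conjugating $I/2$ leaves it invariant.

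The key step is the pair $(c_k,d_k)$, where one of the bits is correlated with $s_k=d_k\oplus f_k(\cdots)$. I first average over the key bit that sits in the Pauli-$Z$ exponent of the $k$-th magic qubit and does \emph{not} enter $s_k$. By the $Z$-twirl observation, this sends the $k$-th qubit to $I/2$ whatever the other bit is. The remaining bit then enters only through $|s_k\rangle\langle s_k|$. Since $f_k$ does not depend on it, $d_k\mapsto d_k\oplus f_k$ is a bijection of $\{0,1\}$, so the average of $|s_k\rangle\langle s_k|$ is $I/2$. This is the classical one-time pad. The result is a factor $I_4/4$ on (qubit $k$, register $k$), tensored with what remains. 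The remainder has exactly the form of $\rho_{k-1}$, averaged over $(c_1,\dots,c_{k-1},d_1,\dots,d_{k-1})$, with the later qubits already replaced by $I/2$.

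Iterating down to $k=0$, where only magic qubits remain, gives $\frac{1}{2^{t+k}}I_{2^{t+k}}$. Because each $f_i$ depends only on earlier indices, peeling from the top never disturbs a function still to be averaged; this order is the point of the induction.

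I expect the main obstacle to be bookkeeping the roles of the two key bits. The Supplemental Material writes the pad as $S^{c_i}Z^{d_i}$ with $s_i=d_i\oplus f_i$, whereas \cref{fig:gadg} and the main text use $S^{d_i}Z^{c_i}$. For the argument, the bit that one-time-pads $s_i$ must be the $S$-exponent, and the bit in the $Z$-exponent must be free. If instead the $Z$-bit were the one fixed by $s_i$, the $S$-twirl alone would give only $\tfrac12(\rho+S\rho S^\dagger)\neq I/2$. So I would first check against \Cref{alg:bpbc}, where $s_i=\mathbf{u}_\ell\oplus d_i$ and $d_i$ multiplies $S$ in the gadget. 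This confirms that the intended state is $S^{d_i}Z^{c_i}|T\rangle$, and I would run the argument with that convention.
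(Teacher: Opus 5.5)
Your proposal is correct and takes essentially the same route as the paper: an induction on $k$ that splits off the pair $(c_k,d_k)$, uses $\tau_{00}+\tau_{10}=\tau_{01}+\tau_{11}=I_2$ (the $Z$-twirl of the equatorial state) to make the $k$-th magic qubit maximally mixed for either value of the $S$-bit, and then applies the one-time pad $s_k=d_k\oplus f_k$ on the classical register, independently of $f_k$. Your resolution of the $S^{c_i}Z^{d_i}$ versus $S^{d_i}Z^{c_i}$ clash is the one the paper's proof adopts, since it defines $\tau_{cd}$ as the state $S^dZ^c|T\rangle$; and your remark that the latest pair must be summed first, because $f_k$ depends on earlier keys, makes explicit a point that the paper's factorized product of sums passes over.
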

	
	\begin{proof} We first introduce the following notation. For any $c,d\in\{0,1\}$, we let $\tau_{cd}$ be the density matrix corresponding to the pure state $S^dZ^c|T\rangle=\frac{1}{\sqrt{2}}\left(|0\rangle+(-1)^ce^{\frac{i\pi}{4}+\frac{i\pi d}{2}}|1\rangle\right)$. Now we prove \Cref{MaximallyMixedLemm} by induction on $k$. We first note that
		\begin{equation}
			\begin{split}
				\tau_{00}+\tau_{10}=I_2 \,,\\
				\tau_{01}+\tau_{11}=I_2\,.
			\end{split}
			\label{SumsToI2}
		\end{equation}
		So, for $k=0$, we indeed have
		\begin{equation*}
			\begin{split}
				\frac{1}{2^{2t}}\sum_{(\mathbf{c}, \mathbf{d})\in\mathbb{F}_2^{2t}}\rho_0(\tilde{c}, \tilde{d}) &=\frac{1}{2^{2t}}\left(\tau_{00}+\tau_{01}+\tau_{10}+\tau_{11}\right)^{\otimes t}\\
				&=\frac{1}{2^{2t}}\left(2I_2\right)^{\otimes t}=\frac{1}{2^t}I_{2^t} \,,
			\end{split}
		\end{equation*}
		as desired. Now let $0<k\leq t$ and suppose the induction hypothesis holds for any $0\leq i<k$. Suppose we are at the start of step $k$ of \ref{enum:AuxProt3} of \ref{Prot:HiddenCompBasisAux}. Fix an arbitrary $(\mathbf{c}, \mathbf{d})\in\mathbb{F}_2^{2t}$, i.e., the client's initial classical encryption key for the rotation of $|T\rangle^{\otimes t}$. Let $(C,\mathbf{x})$ and $f_{k}:\{0,1\}^{n+3k}\mapsto\{0,1\}$ be as in \Cref{TBooleanFunctions}. We then have
		\begin{equation*}
			\rho_{k}(\mathbf{c}, \mathbf{d}\vert \mathbf{s}_{k})=\rho_{k-1}(\mathbf{c}, \mathbf{d}|\mathbf{s}_{k-1})\otimes |f_k\oplus d_k\rangle\langle f_k\oplus d_k|
		\end{equation*}
		where we have suppressed the arguments to $f_{k}$. We note that, following \Cref{TBooleanFunctions}, the terms $c_k, d_k$ only appear in the subsystems of index $k$ and $t+k$ of $\rho_{k}(\mathbf{c}, \mathbf{d}\vert \mathbf{s}_{k})$. Furthermore, the bit $f_k$ only depends on $\mathbf{x}, c_1, \cdots, c_{k-1}, d_1, \cdots, d_{k-1}, m_1, \cdots, m_{k-1}$, so it does not depend on $d_k, c_k$. We write $\sigma=\sigma\left(c_k, d_k|\mathbf{s}_k\right)$ to denote the 2-qubit density matrix obtained from the above expression by tracing out all but subsystems $k, t+k$, with the notation emphasizing that this subsystem only depends on $c_k, d_k, \mathbf{s}_k$. We then have the following:
		\begin{itemize}
			\itemsep-0.2em
			\item If $c_k, d_k=0$, then $\sigma=\tau_{00}\otimes|f_k\rangle\langle f_k|\,.$
			\item If $c_k=0$ and $d_k=1$, then $\sigma=\tau_{01}\otimes|\overline{f_k}\rangle\langle\overline{f_k}|\,.$
			\item If $c_k=1$ and $d_k=0$, then $\sigma=\tau_{10}\otimes |f_k\rangle\langle f_k|\,.$
			\item If $c_k=d_k=1$, then $\sigma=\tau_{11}\otimes |\overline{f_k}\rangle\langle\overline{f_k}|\,.$
		\end{itemize}
		Thus, by \Cref{SumsToI2}, we get
		\begin{equation}
			\begin{split}
				&\sum_{c,d\in\{0,1\}}\sigma\left(c,d|\mathbf{s}_k\right)\\
				&=\tau_{00}\otimes |f_k\rangle\langle f_k|+\tau_{01}\otimes |\overline{f_k}\rangle\langle \overline{f_k}| \\ 
				&+\tau_{10}\otimes |f_k\rangle\langle f_k|+\tau_{11}\otimes  |\overline{f_k}\rangle\langle \overline{f_k}| \\
				&=(\tau_{00}+\tau_{10})\otimes |f_k\rangle\langle f_k|+(\tau_{01}+\tau_{11})\otimes |\overline{f_k}\rangle\langle \overline{f_k}| \\ &=I_2\otimes|f_k\rangle\langle f_k|+I_2\otimes |\overline{f_k}\rangle\langle \overline{f_k}| \\
				&=I_4 \,.
			\end{split}
			\label{splitlineidentity2}
		\end{equation}
		The sequence of equalities in \Cref{splitlineidentity2} holds regardless of the probability distribution on $\{0,1\}$ induced by $f_k$. Now, we turn our attention to $\rho_{k-1}(\mathbf{c}, \mathbf{d}\vert\mathbf{s}_{k-1})$. If we trace out the $k$th qubit, the resulting subsystem no longer depends on $c_k, d_k$ (again by \Cref{TBooleanFunctions}) so it is indexed by a vector $\left(\mathbf{a}, \mathbf{b}\right)$ where $\mathbf{a}, \mathbf{b}\in\mathbb{F}_2^{t-1}$. We write $\rho'\left(\mathbf{a}, \mathbf{b}\vert\mathbf{s}_{k-1}\right)$ to denote this $(t+k-2)$-qubit system. Then, by induction, we have
		\begin{equation}\sum_{(\mathbf{a}, \mathbf{b})\in\mathbb{F}_2^{2(t-1)}}\rho'_{k-1}\left(\mathbf{a}, \mathbf{b}|\mathbf{s}_{k-1}\right)=\frac{2^{2(t-1)}}{2^{t+k-2}}I_{2^{t+k-2}}\,.
			\label{labelsplitlineover2t-1}
		\end{equation}
		
		Summing $\rho_k(\mathbf{c}, \mathbf{d}|\mathbf{s}_k)$ over all the terms of $\mathbb{F}_2^{2t}$, and using \Cref{splitlineidentity2} and \Cref{labelsplitlineover2t-1}, we obtain the following, where the second line holds up to permutation of the tensor product factors.
		\begin{equation*}
			\begin{split}
				&\sum_{(\mathbf{c}, \mathbf{d})\in\mathbb{F}_2^{2t}}\rho_k\left(\mathbf{c}, \mathbf{d}\vert \mathbf{s}_k\right)= \\ 
				&\left(\sum_{\left(\mathbf{a}, \mathbf{b}\right)\in\mathbb{F}_2^{2(t-1)}}\rho'_{k-1}\left(\mathbf{a}, \mathbf{b}\vert\mathbf{s}_{k-1}\right)\right)\otimes\left(\sum_{c,d\in\{0,1\}}\sigma\left(c,d\vert\mathbf{s}_k\right)\right) \\
				&=\left(\frac{2^{2(t-1)}}{2^{t+k-2}}I_{2^{t+k-2}}\right)\otimes I_4=\frac{2^{2t}}{2^{t+k}}I_{2^{t+k}}\,,
			\end{split}
		\end{equation*}
		as desired. \end{proof}
	
	With the above lemmas in hand, the proof of security of \ref{Prot:HiddenCompBasisAux} is relatively straightforward.
	
	\begin{theorem}\label{Pro2AProof} \ref{Prot:HiddenCompBasisAux} is perfectly secure. \end{theorem}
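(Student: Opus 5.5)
To prove \Cref{Pro2AProof}, I would construct, for an arbitrary (possibly deviating) server $Y$, a simulator $\mathscr{S}_Y$ that never sees $\mathbf{x}$. The simulator reproduces the client's messages using fresh randomness.

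\textbf{Simulator.} On input $\delta=\eta(C)$, $\mathscr{S}_Y$ does the following.
\begin{enumerate}[label=\arabic*)]
\item It prepares $t$ maximally mixed qubits, equivalently $t$ halves of EPR pairs whose other halves it keeps or traces out, and hands them to $Y$.
\item Each time $Y$ returns a measurement outcome, it replies with a fresh uniformly random bit $s_{k+1}$.
\item Finally, it outputs whatever $Y$ outputs on $\mathcal{Z}$.
\end{enumerate}
The induced channel $\Sigma_\delta$ acts only on $\mathcal{B}_Y$ and manifestly does not depend on $\mathbf{x}$. It then remains to show $\|\Phi_\delta-\Sigma'_\delta\|_\diamond=0$.

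\textbf{Key step.} The heart of the argument is that the joint state of everything the client ever sends to $Y$ is independent of $\mathbf{x}$ and equal to what the simulator sends. Averaged over the secret key $(\mathbf{c},\mathbf{d})$, that state is maximally mixed.

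The subtlety is adaptivity: $Y$ may act arbitrarily on the quantum register, entangled with $\mathcal{B}_Y$, before each $s_{k+1}$ is sent. The outcomes $m_j$ feeding $f_k$ in \Cref{TBooleanFunctions} are produced by $Y$, possibly correlated with $c_j,d_j$ for $j<k$.

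To handle this, I would argue step by step, conditioning on the entire history $(\mathbf{s}_{k-1},m_1,\dots,m_{k})$ together with $Y$'s internal state. By \Cref{TBooleanFunctions}, $s_k=d_k\oplus f_k(\cdot)$, where $f_k$ depends only on $\mathbf{x}$, on earlier key bits and on earlier outcomes, and never on $(c_k,d_k)$. Averaging over $c_k$, the $k$th qubit $\tau_{c_kd_k}$ becomes $I_2/2$ for each value of $d_k$, by \cref{SumsToI2}. Hence the $k$th qubit is decoupled from $d_k$, and $d_k$ is an independent uniform bit. This makes $s_k$ a uniform bit independent of all else in $Y$'s view, exactly as in the computation \cref{splitlineidentity2}.

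\textbf{Expected obstacle.} I expect the main difficulty to be making \Cref{MaximallyMixedLemm} rigorous in the presence of an adaptive, entangling adversary; the lemma as stated is for fixed $\mathbf{s}_k$ and a separable $\rho_k$. I would handle this by a purification/EPR argument, as in the TRC derivation in the main text. Replace the client's preparation by $t$ EPR pairs, with the client measuring her halves in basis $X$ or $Y$ according to $d_i$ and recording the outcome as $c_i$. Her measurement on pair $k$ can be delayed until after $s_k$ is fixed: choose $d_k$ uniformly, then set $s_k=d_k\oplus f_k$. Since $d_k$ is uniform and used nowhere else in $Y$'s view, $s_k$ is uniform and independent.

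In this picture the client's later actions act only on her own registers, which are traced out in $\Sigma'_\delta$. The reduced channel on $\mathcal{B}_Y\to\mathcal{Z}$ is therefore identical to the simulator's: $Y$ receives EPR halves and independent uniform bits, regardless of $\mathbf{x}$. Equality of channels for every input state on $\mathcal{A}_X\otimes\mathcal{B}_Y$ gives diamond distance zero. Combined with \Cref{lemma:AuxToMainPerfSec}, this also yields \Cref{MainSecurityThm}.
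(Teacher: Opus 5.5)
Your proof is correct and ends with essentially the paper's simulator, but the indistinguishability argument takes a different route. The paper's $\Sigma_\delta$ runs $\Phi_\delta$ with the client's registers replaced by a maximally mixed $\mu$, which amounts to handing $Y$ maximally mixed qubits and uniform bits, as yours does.

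The paper then gets $\|\Sigma'_\delta-\Phi_\delta\|_\diamond=0$ from Lemma~\ref{MaximallyMixedLemm}, a prepare-and-measure computation. By induction on $k$, using $\tau_{00}+\tau_{10}=\tau_{01}+\tau_{11}=I_2$ and the fact that $f_k$ does not involve $(c_k,d_k)$, the key-average of the separable state $\rho_k(\mathbf{c},\mathbf{d}\vert\mathbf{s}_k)$ is maximally mixed. Interactivity is handled by a padding remark, and the authors stress that this avoids the entanglement-based auxiliary protocol of Broadbent's QCED proof.

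You instead purify the preparation into EPR pairs. You also defer both the sampling of $d_k$ and the $X$/$Y$ measurement of the client's $k$th half until just after $s_k$ is sent. By \Cref{TBooleanFunctions}, neither $c_k$ nor $d_k$ is used before that point. So $s_k=d_k\oplus f_k$ is independent of the entire joint state at that moment, including the server's quantum registers and the outcomes $m_j$. Apart from the later $s_j$, which are masked the same way, every subsequent client action acts only on her own registers. Hence the server's reduced state evolves exactly as it does against your simulator.

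The trade-off is as follows. The paper's route is more elementary, but it treats the $m_j$ (hence $f_k$) as fixed parameters. The adaptive, entangling server is therefore covered only by the informal remark that the computation ``holds regardless of the probability distribution induced by $f_k$'' and by padding. Your route requires checking the EPR equivalence, including a relabelling of the $Y$ outcome since $(Y\otimes Y)|\Phi^+\rangle=-|\Phi^+\rangle$. In exchange, it handles adaptivity rigorously without needing Lemma~\ref{MaximallyMixedLemm} at all. It is also the same deferred-measurement mechanism the paper uses to derive TRC.

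One small imprecision: the client's leftover registers disappear because $\Phi_\delta$ outputs only $\mathcal{Z}$. They are not removed by the partial trace in $\Sigma'_\delta$, which acts on the input register $\mathcal{A}_X$.
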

	
	\begin{proof} Let $(X, Y)$ be an arbitrary initialization for \ref{Prot:HiddenCompBasisAux} and let $\delta\in\{0,1\}^*$ be a bit string encoding a shared protocol input circuit. Let $\Phi_{\delta}:L(\mathcal{A}_X\otimes\mathcal{B}_Y)\mapsto L(\mathcal{Z})$ be the corresponding channel induced by this potentially deviating server. We define $\mathscr{S}_{Y}$ to be the algorithm which, on input $\delta\in\{0,1\}^*$, induces the channel $\Sigma_{\delta}:L(
		\mathcal{B}_Y)\mapsto L(\mathcal{Z})$ which, on input a $\tau\in L(\mathcal{B}_Y)$ consists of applying $\Phi_{\delta}$ to $\mu\otimes\tau$, where $\mu\in L(\mathcal{A}_X)$ is maximally mixed. By Lemma \ref{MaximallyMixedLemm}, we have $||\Sigma'_{\delta}-\Phi_{\delta}||_{\diamond}=0$, where $\Sigma'_{\delta}$ is as in Definition \ref{EpsilonSecurity}, so we are done. \end{proof}
	
	The proof of \Cref{Pro2AProof} requires some care with the formalism, since the size of the state specified in \Cref{TBooleanFunctions} depends on the current step of the interactive protocol between $X$ and $Y$: We take $L(\mathcal{A}_X)$ to be sufficiently large that it contains enough registers for both the initial density matrix and all of the (classical) messages exchanged, regarded as computational basis states. By padding each state in \Cref{rhokfromcds} with extra registers which are initially maximally mixed, we can regard the interactive part of $\Phi_{\delta}$ as executing a sequence of steps where each interactive step between the client and server only involves operations that have no effect on the dimension of the current shared density matrix, as all measurements are nondestructive. Then, \Cref{TBooleanFunctions} states that, if, at each such step, we discard all the $\mathcal{A}_X$-registers of the current shared density matrix of $L(\mathcal{A}_X\otimes\mathcal{B}_Y)$ and replace them with the maximally mixed state on the $\mathcal{A}_X$-registers, this has no effect on the final output in the server's output space. We note that, although we have followed the security definition of~\cite{Bro15}, our security proof is simpler because our protocol, unlike that of~\cite{Bro15}, does not require us to analyze an \emph{entanglement-based} auxiliary protocol in which the density matrix the client sends to the server is correlated with some additional registers kept only by her. 
	

\begin{thebibliography}{31}%
		\makeatletter
		\providecommand \@ifxundefined [1]{%
			\@ifx{#1\undefined}
		}%
		\providecommand \@ifnum [1]{%
			\ifnum #1\expandafter \@firstoftwo
			\else \expandafter \@secondoftwo
			\fi
		}%
		\providecommand \@ifx [1]{%
			\ifx #1\expandafter \@firstoftwo
			\else \expandafter \@secondoftwo
			\fi
		}%
		\providecommand \natexlab [1]{#1}%
		\providecommand \enquote  [1]{``#1''}%
		\providecommand \bibnamefont  [1]{#1}%
		\providecommand \bibfnamefont [1]{#1}%
		\providecommand \citenamefont [1]{#1}%
		\providecommand \href@noop [0]{\@secondoftwo}%
		\providecommand \href [0]{\begingroup \@sanitize@url \@href}%
		\providecommand \@href[1]{\@@startlink{#1}\@@href}%
		\providecommand \@@href[1]{\endgroup#1\@@endlink}%
		\providecommand \@sanitize@url [0]{\catcode `\\12\catcode `\$12\catcode `\&12\catcode `\#12\catcode `\^12\catcode `\_12\catcode `\%12\relax}%
		\providecommand \@@startlink[1]{}%
		\providecommand \@@endlink[0]{}%
		\providecommand \url  [0]{\begingroup\@sanitize@url \@url }%
		\providecommand \@url [1]{\endgroup\@href {#1}{\urlprefix }}%
		\providecommand \urlprefix  [0]{URL }%
		\providecommand \Eprint [0]{\href }%
		\providecommand \doibase [0]{https://doi.org/}%
		\providecommand \selectlanguage [0]{\@gobble}%
		\providecommand \bibinfo  [0]{\@secondoftwo}%
		\providecommand \bibfield  [0]{\@secondoftwo}%
		\providecommand \translation [1]{[#1]}%
		\providecommand \BibitemOpen [0]{}%
		\providecommand \bibitemStop [0]{}%
		\providecommand \bibitemNoStop [0]{.\EOS\space}%
		\providecommand \EOS [0]{\spacefactor3000\relax}%
		\providecommand \BibitemShut  [1]{\csname bibitem#1\endcsname}%
		\let\auto@bib@innerbib\@empty
		\bibitem [{\citenamefont {Preskill}(2018)}]{Pre18}%
		\BibitemOpen
		\bibfield  {author} {\bibinfo {author} {\bibfnamefont {J.}~\bibnamefont {Preskill}},\ }\bibfield  {title} {\bibinfo {title} {Quantum {C}omputing in the {NISQ} era and beyond},\ }\href {https://doi.org/10.22331/q-2018-08-06-79} {\bibfield  {journal} {\bibinfo  {journal} {Quant.}\ }\textbf {\bibinfo {volume} {2}},\ \bibinfo {pages} {79} (\bibinfo {year} {2018})}\BibitemShut {NoStop}%
		\bibitem [{\citenamefont {Childs}(2005)}]{Chi05}%
		\BibitemOpen
		\bibfield  {author} {\bibinfo {author} {\bibfnamefont {A.~M.}\ \bibnamefont {Childs}},\ }\bibfield  {title} {\bibinfo {title} {Secure assisted quantum computation},\ }\href {https://doi.org/10.26421/QIC5.6-4} {\bibfield  {journal} {\bibinfo  {journal} {Quant. Inf. Comp.}\ }\textbf {\bibinfo {volume} {5}},\ \bibinfo {pages} {456} (\bibinfo {year} {2005})}\BibitemShut {NoStop}%
		\bibitem [{\citenamefont {Broadbent}\ \emph {et~al.}(2009)\citenamefont {Broadbent}, \citenamefont {Fitzsimons},\ and\ \citenamefont {Kashefi}}]{BFK09}%
		\BibitemOpen
		\bibfield  {author} {\bibinfo {author} {\bibfnamefont {A.}~\bibnamefont {Broadbent}}, \bibinfo {author} {\bibfnamefont {J.}~\bibnamefont {Fitzsimons}},\ and\ \bibinfo {author} {\bibfnamefont {E.}~\bibnamefont {Kashefi}},\ }\bibfield  {title} {\bibinfo {title} {Universal {B}lind {Q}uantum {C}omputation},\ }in\ \href {https://doi.org/10.1109/FOCS.2009.36} {\emph {\bibinfo {booktitle} {FOCS 2009}}}\ (\bibinfo {year} {2009})\ pp.\ \bibinfo {pages} {517--526}\BibitemShut {NoStop}%
		\bibitem [{\citenamefont {Broadbent}(2015)}]{Bro15}%
		\BibitemOpen
		\bibfield  {author} {\bibinfo {author} {\bibfnamefont {A.}~\bibnamefont {Broadbent}},\ }\bibfield  {title} {\bibinfo {title} {Delegating private quantum computations},\ }\href {https://doi.org/10.1139/cjp-2015-0030} {\bibfield  {journal} {\bibinfo  {journal} {Can. J. Phys.}\ }\textbf {\bibinfo {volume} {93}},\ \bibinfo {pages} {941} (\bibinfo {year} {2015})}\BibitemShut {NoStop}%
		\bibitem [{\citenamefont {Aaronson}\ \emph {et~al.}(2019)\citenamefont {Aaronson}, \citenamefont {Cojocaru}, \citenamefont {Gheorghiu},\ and\ \citenamefont {Kashefi}}]{ACGK19}%
		\BibitemOpen
		\bibfield  {author} {\bibinfo {author} {\bibfnamefont {S.}~\bibnamefont {Aaronson}}, \bibinfo {author} {\bibfnamefont {A.}~\bibnamefont {Cojocaru}}, \bibinfo {author} {\bibfnamefont {A.}~\bibnamefont {Gheorghiu}},\ and\ \bibinfo {author} {\bibfnamefont {E.}~\bibnamefont {Kashefi}},\ }\bibfield  {title} {\bibinfo {title} {Complexity-{T}heoretic {L}imitations on {B}lind {D}elegated {Q}uantum {C}omputation},\ }in\ \href {https://doi.org/10.4230/LIPIcs.ICALP.2019.6} {\emph {\bibinfo {booktitle} {ICALP2019}}}\ (\bibinfo {year} {2019})\ pp.\ \bibinfo {pages} {6:1--6:13}\BibitemShut {NoStop}%
		\bibitem [{\citenamefont {Mantri}\ \emph {et~al.}(2013)\citenamefont {Mantri}, \citenamefont {P{\'e}rez-Delgado},\ and\ \citenamefont {Fitzsimons}}]{MPDF13}%
		\BibitemOpen
		\bibfield  {author} {\bibinfo {author} {\bibfnamefont {A.}~\bibnamefont {Mantri}}, \bibinfo {author} {\bibfnamefont {C.~A.}\ \bibnamefont {P{\'e}rez-Delgado}},\ and\ \bibinfo {author} {\bibfnamefont {J.~F.}\ \bibnamefont {Fitzsimons}},\ }\bibfield  {title} {\bibinfo {title} {Optimal {B}lind {Q}uantum {C}omputation},\ }\href {https://doi.org/10.1103/PhysRevLett.111.230502} {\bibfield  {journal} {\bibinfo  {journal} {Phys. Rev. Lett.}\ }\textbf {\bibinfo {volume} {111}},\ \bibinfo {pages} {230502} (\bibinfo {year} {2013})}\BibitemShut {NoStop}%
		\bibitem [{\citenamefont {Davies}\ and\ \citenamefont {Kay}(2025)}]{DK25arxiv}%
		\BibitemOpen
		\bibfield  {author} {\bibinfo {author} {\bibfnamefont {E.}~\bibnamefont {Davies}}\ and\ \bibinfo {author} {\bibfnamefont {A.}~\bibnamefont {Kay}},\ }\href@noop {} {\bibinfo {title} {Communication-{O}ptimal {B}lind {Q}uantum {P}rotocols}},\ \bibinfo {howpublished} {E-print arXiv:2510.07112 [quant-ph]} (\bibinfo {year} {2025})\BibitemShut {NoStop}%
		\bibitem [{\citenamefont {Drmota}\ \emph {et~al.}(2024)\citenamefont {Drmota}, \citenamefont {Nadlinger}, \citenamefont {Main}, \citenamefont {Nichol}, \citenamefont {Ainley}, \citenamefont {Leichtle}, \citenamefont {Mantri}, \citenamefont {Kashefi}, \citenamefont {Srinivas}, \citenamefont {Araneda}, \citenamefont {Ballance},\ and\ \citenamefont {Lucas}}]{DNM+24}%
		\BibitemOpen
		\bibfield  {author} {\bibinfo {author} {\bibfnamefont {P.}~\bibnamefont {Drmota}}, \bibinfo {author} {\bibfnamefont {D.~P.}\ \bibnamefont {Nadlinger}}, \bibinfo {author} {\bibfnamefont {D.}~\bibnamefont {Main}}, \bibinfo {author} {\bibfnamefont {B.~C.}\ \bibnamefont {Nichol}}, \bibinfo {author} {\bibfnamefont {E.~M.}\ \bibnamefont {Ainley}}, \bibinfo {author} {\bibfnamefont {D.}~\bibnamefont {Leichtle}}, \bibinfo {author} {\bibfnamefont {A.}~\bibnamefont {Mantri}}, \bibinfo {author} {\bibfnamefont {E.}~\bibnamefont {Kashefi}}, \bibinfo {author} {\bibfnamefont {R.}~\bibnamefont {Srinivas}}, \bibinfo {author} {\bibfnamefont {G.}~\bibnamefont {Araneda}}, \bibinfo {author} {\bibfnamefont {C.~J.}\ \bibnamefont {Ballance}},\ and\ \bibinfo {author} {\bibfnamefont {D.~M.}\ \bibnamefont {Lucas}},\ }\bibfield  {title} {\bibinfo {title} {Verifiable {B}lind {Q}uantum {C}omputing with {T}rapped {I}ons and {S}ingle {P}hotons},\ }\href {https://doi.org/10.1103/PhysRevLett.132.150604} {\bibfield  {journal} {\bibinfo
				{journal} {Phys. Rev. Lett.}\ }\textbf {\bibinfo {volume} {132}},\ \bibinfo {pages} {150604} (\bibinfo {year} {2024})}\BibitemShut {NoStop}%
		\bibitem [{\citenamefont {Wei}\ \emph {et~al.}(2025)\citenamefont {Wei}, \citenamefont {Stas}, \citenamefont {Suleymanzade}, \citenamefont {Baranes}, \citenamefont {Machado}, \citenamefont {Huan}, \citenamefont {Knaut}, \citenamefont {Ding}, \citenamefont {Merz}, \citenamefont {Knall}, \citenamefont {Yazlar}, \citenamefont {Sirotin}, \citenamefont {Wang}, \citenamefont {Machielse}, \citenamefont {Yelin}, \citenamefont {Borregaard}, \citenamefont {Park}, \citenamefont {Lon{\v{c}}ar},\ and\ \citenamefont {Lukin}}]{WSS+25}%
		\BibitemOpen
		\bibfield  {author} {\bibinfo {author} {\bibfnamefont {Y.-C.}\ \bibnamefont {Wei}}, \bibinfo {author} {\bibfnamefont {P.-J.}\ \bibnamefont {Stas}}, \bibinfo {author} {\bibfnamefont {A.}~\bibnamefont {Suleymanzade}}, \bibinfo {author} {\bibfnamefont {G.}~\bibnamefont {Baranes}}, \bibinfo {author} {\bibfnamefont {F.}~\bibnamefont {Machado}}, \bibinfo {author} {\bibfnamefont {Y.~Q.}\ \bibnamefont {Huan}}, \bibinfo {author} {\bibfnamefont {C.~M.}\ \bibnamefont {Knaut}}, \bibinfo {author} {\bibfnamefont {S.~W.}\ \bibnamefont {Ding}}, \bibinfo {author} {\bibfnamefont {M.}~\bibnamefont {Merz}}, \bibinfo {author} {\bibfnamefont {E.~N.}\ \bibnamefont {Knall}}, \bibinfo {author} {\bibfnamefont {U.}~\bibnamefont {Yazlar}}, \bibinfo {author} {\bibfnamefont {M.}~\bibnamefont {Sirotin}}, \bibinfo {author} {\bibfnamefont {I.~W.}\ \bibnamefont {Wang}}, \bibinfo {author} {\bibfnamefont {B.}~\bibnamefont {Machielse}}, \bibinfo {author} {\bibfnamefont {S.~F.}\ \bibnamefont {Yelin}}, \bibinfo {author} {\bibfnamefont
				{J.}~\bibnamefont {Borregaard}}, \bibinfo {author} {\bibfnamefont {H.}~\bibnamefont {Park}}, \bibinfo {author} {\bibfnamefont {M.}~\bibnamefont {Lon{\v{c}}ar}},\ and\ \bibinfo {author} {\bibfnamefont {M.~D.}\ \bibnamefont {Lukin}},\ }\bibfield  {title} {\bibinfo {title} {Universal distributed blind quantum computing with solid-state qubits},\ }\href {https://doi.org/10.1126/science.adu6894} {\bibfield  {journal} {\bibinfo  {journal} {Sci.}\ }\textbf {\bibinfo {volume} {388}},\ \bibinfo {pages} {509} (\bibinfo {year} {2025})}\BibitemShut {NoStop}%
		\bibitem [{\citenamefont {Debroux}\ \emph {et~al.}(2021)\citenamefont {Debroux}, \citenamefont {Michaels}, \citenamefont {Purser}, \citenamefont {Wan}, \citenamefont {Trusheim}, \citenamefont {Arjona~Mart\'{\i}nez}, \citenamefont {Parker}, \citenamefont {Stramma}, \citenamefont {Chen}, \citenamefont {de~Santis}, \citenamefont {Alexeev}, \citenamefont {Ferrari}, \citenamefont {Englund}, \citenamefont {Gangloff},\ and\ \citenamefont {Atat\"ure}}]{DMP+21}%
		\BibitemOpen
		\bibfield  {author} {\bibinfo {author} {\bibfnamefont {R.}~\bibnamefont {Debroux}}, \bibinfo {author} {\bibfnamefont {C.~P.}\ \bibnamefont {Michaels}}, \bibinfo {author} {\bibfnamefont {C.~M.}\ \bibnamefont {Purser}}, \bibinfo {author} {\bibfnamefont {N.}~\bibnamefont {Wan}}, \bibinfo {author} {\bibfnamefont {M.~E.}\ \bibnamefont {Trusheim}}, \bibinfo {author} {\bibfnamefont {J.}~\bibnamefont {Arjona~Mart\'{\i}nez}}, \bibinfo {author} {\bibfnamefont {R.~A.}\ \bibnamefont {Parker}}, \bibinfo {author} {\bibfnamefont {A.~M.}\ \bibnamefont {Stramma}}, \bibinfo {author} {\bibfnamefont {K.~C.}\ \bibnamefont {Chen}}, \bibinfo {author} {\bibfnamefont {L.}~\bibnamefont {de~Santis}}, \bibinfo {author} {\bibfnamefont {E.~M.}\ \bibnamefont {Alexeev}}, \bibinfo {author} {\bibfnamefont {A.~C.}\ \bibnamefont {Ferrari}}, \bibinfo {author} {\bibfnamefont {D.}~\bibnamefont {Englund}}, \bibinfo {author} {\bibfnamefont {D.~A.}\ \bibnamefont {Gangloff}},\ and\ \bibinfo {author} {\bibfnamefont {M.}~\bibnamefont {Atat\"ure}},\
		}\bibfield  {title} {\bibinfo {title} {Quantum {C}ontrol of the {T}in-{V}acancy {S}pin {Q}ubit in {D}iamond},\ }\href {https://doi.org/10.1103/PhysRevX.11.041041} {\bibfield  {journal} {\bibinfo  {journal} {Phys. Rev. X}\ }\textbf {\bibinfo {volume} {11}},\ \bibinfo {pages} {041041} (\bibinfo {year} {2021})}\BibitemShut {NoStop}%
		\bibitem [{\citenamefont {Abobeih}\ \emph {et~al.}(2022)\citenamefont {Abobeih}, \citenamefont {Wang}, \citenamefont {Randall}, \citenamefont {Loenen}, \citenamefont {Bradley}, \citenamefont {Markham}, \citenamefont {Twitchen}, \citenamefont {Terhal},\ and\ \citenamefont {Taminiau}}]{AWR+22}%
		\BibitemOpen
		\bibfield  {author} {\bibinfo {author} {\bibfnamefont {M.~H.}\ \bibnamefont {Abobeih}}, \bibinfo {author} {\bibfnamefont {Y.}~\bibnamefont {Wang}}, \bibinfo {author} {\bibfnamefont {J.}~\bibnamefont {Randall}}, \bibinfo {author} {\bibfnamefont {S.~J.~H.}\ \bibnamefont {Loenen}}, \bibinfo {author} {\bibfnamefont {C.~E.}\ \bibnamefont {Bradley}}, \bibinfo {author} {\bibfnamefont {M.}~\bibnamefont {Markham}}, \bibinfo {author} {\bibfnamefont {D.~J.}\ \bibnamefont {Twitchen}}, \bibinfo {author} {\bibfnamefont {B.~M.}\ \bibnamefont {Terhal}},\ and\ \bibinfo {author} {\bibfnamefont {T.~H.}\ \bibnamefont {Taminiau}},\ }\bibfield  {title} {\bibinfo {title} {Fault-tolerant operation of a logical qubit in a diamond quantum processor},\ }\href {https://doi.org/10.1038/s41586-022-04819-6} {\bibfield  {journal} {\bibinfo  {journal} {Nat.}\ }\textbf {\bibinfo {volume} {606}},\ \bibinfo {pages} {884} (\bibinfo {year} {2022})}\BibitemShut {NoStop}%
		\bibitem [{\citenamefont {Pino}\ \emph {et~al.}(2021)\citenamefont {Pino}, \citenamefont {Dreiling}, \citenamefont {Figgatt}, \citenamefont {Gaebler}, \citenamefont {Moses}, \citenamefont {Allman}, \citenamefont {Baldwin}, \citenamefont {Foss-Feig}, \citenamefont {Hayes}, \citenamefont {Mayer}, \citenamefont {Ryan-Anderson},\ and\ \citenamefont {Neyenhuis}}]{PDF+21}%
		\BibitemOpen
		\bibfield  {author} {\bibinfo {author} {\bibfnamefont {J.~M.}\ \bibnamefont {Pino}}, \bibinfo {author} {\bibfnamefont {J.~M.}\ \bibnamefont {Dreiling}}, \bibinfo {author} {\bibfnamefont {C.}~\bibnamefont {Figgatt}}, \bibinfo {author} {\bibfnamefont {J.~P.}\ \bibnamefont {Gaebler}}, \bibinfo {author} {\bibfnamefont {S.~A.}\ \bibnamefont {Moses}}, \bibinfo {author} {\bibfnamefont {M.~S.}\ \bibnamefont {Allman}}, \bibinfo {author} {\bibfnamefont {C.~H.}\ \bibnamefont {Baldwin}}, \bibinfo {author} {\bibfnamefont {M.}~\bibnamefont {Foss-Feig}}, \bibinfo {author} {\bibfnamefont {D.}~\bibnamefont {Hayes}}, \bibinfo {author} {\bibfnamefont {K.}~\bibnamefont {Mayer}}, \bibinfo {author} {\bibfnamefont {C.}~\bibnamefont {Ryan-Anderson}},\ and\ \bibinfo {author} {\bibfnamefont {B.}~\bibnamefont {Neyenhuis}},\ }\bibfield  {title} {\bibinfo {title} {Demonstration of the trapped-ion quantum {CCD} computer architecture},\ }\href {https://doi.org/10.1038/s41586-021-03318-4} {\bibfield  {journal} {\bibinfo  {journal} {Nat.}\
			}\textbf {\bibinfo {volume} {592}},\ \bibinfo {pages} {209} (\bibinfo {year} {2021})}\BibitemShut {NoStop}%
		\bibitem [{\citenamefont {Moses}\ \emph {et~al.}(2023)\citenamefont {Moses}, \citenamefont {Baldwin}, \citenamefont {Allman}, \citenamefont {Ancona}, \citenamefont {Ascarrunz}, \citenamefont {Barnes}, \citenamefont {Bartolotta}, \citenamefont {Bjork}, \citenamefont {Blanchard}, \citenamefont {Bohn}, \citenamefont {Bohnet}, \citenamefont {Brown}, \citenamefont {Burdick}, \citenamefont {Burton}, \citenamefont {Campbell}, \citenamefont {Campora}, \citenamefont {Carron}, \citenamefont {Chambers}, \citenamefont {Chan}, \citenamefont {Chen}, \citenamefont {Chernoguzov}, \citenamefont {Chertkov}, \citenamefont {Colina}, \citenamefont {Curtis}, \citenamefont {Daniel}, \citenamefont {DeCross}, \citenamefont {Deen}, \citenamefont {Delaney}, \citenamefont {Dreiling}, \citenamefont {Ertsgaard}, \citenamefont {Esposito}, \citenamefont {Estey}, \citenamefont {Fabrikant}, \citenamefont {Figgatt}, \citenamefont {Foltz}, \citenamefont {Foss-Feig}, \citenamefont {Francois}, \citenamefont {Gaebler}, \citenamefont {Gatterman},
			\citenamefont {Gilbreth}, \citenamefont {Giles}, \citenamefont {Glynn}, \citenamefont {Hall}, \citenamefont {Hankin}, \citenamefont {Hansen}, \citenamefont {Hayes}, \citenamefont {Higashi}, \citenamefont {Hoffman}, \citenamefont {Horning}, \citenamefont {Hout}, \citenamefont {Jacobs}, \citenamefont {Johansen}, \citenamefont {Jones}, \citenamefont {Karcz}, \citenamefont {Klein}, \citenamefont {Lauria}, \citenamefont {Lee}, \citenamefont {Liefer}, \citenamefont {Lu}, \citenamefont {Lucchetti}, \citenamefont {Lytle}, \citenamefont {Malm}, \citenamefont {Matheny}, \citenamefont {Mathewson}, \citenamefont {Mayer}, \citenamefont {Miller}, \citenamefont {Mills}, \citenamefont {Neyenhuis}, \citenamefont {Nugent}, \citenamefont {Olson}, \citenamefont {Parks}, \citenamefont {Price}, \citenamefont {Price}, \citenamefont {Pugh}, \citenamefont {Ransford}, \citenamefont {Reed}, \citenamefont {Roman}, \citenamefont {Rowe}, \citenamefont {Ryan-Anderson}, \citenamefont {Sanders}, \citenamefont {Sedlacek}, \citenamefont
			{Shevchuk}, \citenamefont {Siegfried}, \citenamefont {Skripka}, \citenamefont {Spaun}, \citenamefont {Sprenkle}, \citenamefont {Stutz}, \citenamefont {Swallows}, \citenamefont {Tobey}, \citenamefont {Tran}, \citenamefont {Tran}, \citenamefont {Vogt}, \citenamefont {Volin}, \citenamefont {Walker}, \citenamefont {Zolot},\ and\ \citenamefont {Pino}}]{MBA+23}%
		\BibitemOpen
		\bibfield  {author} {\bibinfo {author} {\bibfnamefont {S.~A.}\ \bibnamefont {Moses}}, \bibinfo {author} {\bibfnamefont {C.~H.}\ \bibnamefont {Baldwin}}, \bibinfo {author} {\bibfnamefont {M.~S.}\ \bibnamefont {Allman}}, \bibinfo {author} {\bibfnamefont {R.}~\bibnamefont {Ancona}}, \bibinfo {author} {\bibfnamefont {L.}~\bibnamefont {Ascarrunz}}, \bibinfo {author} {\bibfnamefont {C.}~\bibnamefont {Barnes}}, \bibinfo {author} {\bibfnamefont {J.}~\bibnamefont {Bartolotta}}, \bibinfo {author} {\bibfnamefont {B.}~\bibnamefont {Bjork}}, \bibinfo {author} {\bibfnamefont {P.}~\bibnamefont {Blanchard}}, \bibinfo {author} {\bibfnamefont {M.}~\bibnamefont {Bohn}}, \bibinfo {author} {\bibfnamefont {J.~G.}\ \bibnamefont {Bohnet}}, \bibinfo {author} {\bibfnamefont {N.~C.}\ \bibnamefont {Brown}}, \bibinfo {author} {\bibfnamefont {N.~Q.}\ \bibnamefont {Burdick}}, \bibinfo {author} {\bibfnamefont {W.~C.}\ \bibnamefont {Burton}}, \bibinfo {author} {\bibfnamefont {S.~L.}\ \bibnamefont {Campbell}}, \bibinfo {author}
			{\bibfnamefont {J.~P.}\ \bibnamefont {Campora}}, \bibinfo {author} {\bibfnamefont {C.}~\bibnamefont {Carron}}, \bibinfo {author} {\bibfnamefont {J.}~\bibnamefont {Chambers}}, \bibinfo {author} {\bibfnamefont {J.~W.}\ \bibnamefont {Chan}}, \bibinfo {author} {\bibfnamefont {Y.~H.}\ \bibnamefont {Chen}}, \bibinfo {author} {\bibfnamefont {A.}~\bibnamefont {Chernoguzov}}, \bibinfo {author} {\bibfnamefont {E.}~\bibnamefont {Chertkov}}, \bibinfo {author} {\bibfnamefont {J.}~\bibnamefont {Colina}}, \bibinfo {author} {\bibfnamefont {J.~P.}\ \bibnamefont {Curtis}}, \bibinfo {author} {\bibfnamefont {R.}~\bibnamefont {Daniel}}, \bibinfo {author} {\bibfnamefont {M.}~\bibnamefont {DeCross}}, \bibinfo {author} {\bibfnamefont {D.}~\bibnamefont {Deen}}, \bibinfo {author} {\bibfnamefont {C.}~\bibnamefont {Delaney}}, \bibinfo {author} {\bibfnamefont {J.~M.}\ \bibnamefont {Dreiling}}, \bibinfo {author} {\bibfnamefont {C.~T.}\ \bibnamefont {Ertsgaard}}, \bibinfo {author} {\bibfnamefont {J.}~\bibnamefont {Esposito}}, \bibinfo
			{author} {\bibfnamefont {B.}~\bibnamefont {Estey}}, \bibinfo {author} {\bibfnamefont {M.}~\bibnamefont {Fabrikant}}, \bibinfo {author} {\bibfnamefont {C.}~\bibnamefont {Figgatt}}, \bibinfo {author} {\bibfnamefont {C.}~\bibnamefont {Foltz}}, \bibinfo {author} {\bibfnamefont {M.}~\bibnamefont {Foss-Feig}}, \bibinfo {author} {\bibfnamefont {D.}~\bibnamefont {Francois}}, \bibinfo {author} {\bibfnamefont {J.~P.}\ \bibnamefont {Gaebler}}, \bibinfo {author} {\bibfnamefont {T.~M.}\ \bibnamefont {Gatterman}}, \bibinfo {author} {\bibfnamefont {C.~N.}\ \bibnamefont {Gilbreth}}, \bibinfo {author} {\bibfnamefont {J.}~\bibnamefont {Giles}}, \bibinfo {author} {\bibfnamefont {E.}~\bibnamefont {Glynn}}, \bibinfo {author} {\bibfnamefont {A.}~\bibnamefont {Hall}}, \bibinfo {author} {\bibfnamefont {A.~M.}\ \bibnamefont {Hankin}}, \bibinfo {author} {\bibfnamefont {A.}~\bibnamefont {Hansen}}, \bibinfo {author} {\bibfnamefont {D.}~\bibnamefont {Hayes}}, \bibinfo {author} {\bibfnamefont {B.}~\bibnamefont {Higashi}}, \bibinfo
			{author} {\bibfnamefont {I.~M.}\ \bibnamefont {Hoffman}}, \bibinfo {author} {\bibfnamefont {B.}~\bibnamefont {Horning}}, \bibinfo {author} {\bibfnamefont {J.~J.}\ \bibnamefont {Hout}}, \bibinfo {author} {\bibfnamefont {R.}~\bibnamefont {Jacobs}}, \bibinfo {author} {\bibfnamefont {J.}~\bibnamefont {Johansen}}, \bibinfo {author} {\bibfnamefont {L.}~\bibnamefont {Jones}}, \bibinfo {author} {\bibfnamefont {J.}~\bibnamefont {Karcz}}, \bibinfo {author} {\bibfnamefont {T.}~\bibnamefont {Klein}}, \bibinfo {author} {\bibfnamefont {P.}~\bibnamefont {Lauria}}, \bibinfo {author} {\bibfnamefont {P.}~\bibnamefont {Lee}}, \bibinfo {author} {\bibfnamefont {D.}~\bibnamefont {Liefer}}, \bibinfo {author} {\bibfnamefont {S.~T.}\ \bibnamefont {Lu}}, \bibinfo {author} {\bibfnamefont {D.}~\bibnamefont {Lucchetti}}, \bibinfo {author} {\bibfnamefont {C.}~\bibnamefont {Lytle}}, \bibinfo {author} {\bibfnamefont {A.}~\bibnamefont {Malm}}, \bibinfo {author} {\bibfnamefont {M.}~\bibnamefont {Matheny}}, \bibinfo {author} {\bibfnamefont
				{B.}~\bibnamefont {Mathewson}}, \bibinfo {author} {\bibfnamefont {K.}~\bibnamefont {Mayer}}, \bibinfo {author} {\bibfnamefont {D.~B.}\ \bibnamefont {Miller}}, \bibinfo {author} {\bibfnamefont {M.}~\bibnamefont {Mills}}, \bibinfo {author} {\bibfnamefont {B.}~\bibnamefont {Neyenhuis}}, \bibinfo {author} {\bibfnamefont {L.}~\bibnamefont {Nugent}}, \bibinfo {author} {\bibfnamefont {S.}~\bibnamefont {Olson}}, \bibinfo {author} {\bibfnamefont {J.}~\bibnamefont {Parks}}, \bibinfo {author} {\bibfnamefont {G.~N.}\ \bibnamefont {Price}}, \bibinfo {author} {\bibfnamefont {Z.}~\bibnamefont {Price}}, \bibinfo {author} {\bibfnamefont {M.}~\bibnamefont {Pugh}}, \bibinfo {author} {\bibfnamefont {A.}~\bibnamefont {Ransford}}, \bibinfo {author} {\bibfnamefont {A.~P.}\ \bibnamefont {Reed}}, \bibinfo {author} {\bibfnamefont {C.}~\bibnamefont {Roman}}, \bibinfo {author} {\bibfnamefont {M.}~\bibnamefont {Rowe}}, \bibinfo {author} {\bibfnamefont {C.}~\bibnamefont {Ryan-Anderson}}, \bibinfo {author} {\bibfnamefont
				{S.}~\bibnamefont {Sanders}}, \bibinfo {author} {\bibfnamefont {J.}~\bibnamefont {Sedlacek}}, \bibinfo {author} {\bibfnamefont {P.}~\bibnamefont {Shevchuk}}, \bibinfo {author} {\bibfnamefont {P.}~\bibnamefont {Siegfried}}, \bibinfo {author} {\bibfnamefont {T.}~\bibnamefont {Skripka}}, \bibinfo {author} {\bibfnamefont {B.}~\bibnamefont {Spaun}}, \bibinfo {author} {\bibfnamefont {R.~T.}\ \bibnamefont {Sprenkle}}, \bibinfo {author} {\bibfnamefont {R.~P.}\ \bibnamefont {Stutz}}, \bibinfo {author} {\bibfnamefont {M.}~\bibnamefont {Swallows}}, \bibinfo {author} {\bibfnamefont {R.~I.}\ \bibnamefont {Tobey}}, \bibinfo {author} {\bibfnamefont {A.}~\bibnamefont {Tran}}, \bibinfo {author} {\bibfnamefont {T.}~\bibnamefont {Tran}}, \bibinfo {author} {\bibfnamefont {E.}~\bibnamefont {Vogt}}, \bibinfo {author} {\bibfnamefont {C.}~\bibnamefont {Volin}}, \bibinfo {author} {\bibfnamefont {J.}~\bibnamefont {Walker}}, \bibinfo {author} {\bibfnamefont {A.~M.}\ \bibnamefont {Zolot}},\ and\ \bibinfo {author} {\bibfnamefont
				{J.~M.}\ \bibnamefont {Pino}},\ }\bibfield  {title} {\bibinfo {title} {A {R}ace-{T}rack {T}rapped-{I}on {Q}uantum {P}rocessor},\ }\href {https://doi.org/10.1103/PhysRevX.13.041052} {\bibfield  {journal} {\bibinfo  {journal} {Phys. Rev. X}\ }\textbf {\bibinfo {volume} {13}},\ \bibinfo {pages} {041052} (\bibinfo {year} {2023})}\BibitemShut {NoStop}%
		\bibitem [{\citenamefont {Bravyi}\ \emph {et~al.}(2016)\citenamefont {Bravyi}, \citenamefont {Smith},\ and\ \citenamefont {Smolin}}]{BSS16}%
		\BibitemOpen
		\bibfield  {author} {\bibinfo {author} {\bibfnamefont {S.}~\bibnamefont {Bravyi}}, \bibinfo {author} {\bibfnamefont {G.}~\bibnamefont {Smith}},\ and\ \bibinfo {author} {\bibfnamefont {J.~A.}\ \bibnamefont {Smolin}},\ }\bibfield  {title} {\bibinfo {title} {Trading {C}lassical and {Q}uantum {C}omputational {R}esources},\ }\href {https://doi.org/10.1103/PhysRevX.6.021043} {\bibfield  {journal} {\bibinfo  {journal} {Phys. Rev. X}\ }\textbf {\bibinfo {volume} {6}},\ \bibinfo {pages} {021043} (\bibinfo {year} {2016})}\BibitemShut {NoStop}%
		\bibitem [{\citenamefont {Broadbent}\ and\ \citenamefont {Schaffner}(2016)}]{BS16}%
		\BibitemOpen
		\bibfield  {author} {\bibinfo {author} {\bibfnamefont {A.}~\bibnamefont {Broadbent}}\ and\ \bibinfo {author} {\bibfnamefont {C.}~\bibnamefont {Schaffner}},\ }\bibfield  {title} {\bibinfo {title} {Quantum cryptography beyond quantum key distribution},\ }\href {https://doi.org/10.1007/s10623-015-0157-4} {\bibfield  {journal} {\bibinfo  {journal} {Design. Code. Cryptogr.}\ }\textbf {\bibinfo {volume} {78}},\ \bibinfo {pages} {351} (\bibinfo {year} {2016})}\BibitemShut {NoStop}%
		\bibitem [{\citenamefont {Miyake}\ and\ \citenamefont {Miller}(2016)}]{MM16}%
		\BibitemOpen
		\bibfield  {author} {\bibinfo {author} {\bibfnamefont {A.}~\bibnamefont {Miyake}}\ and\ \bibinfo {author} {\bibfnamefont {J.}~\bibnamefont {Miller}},\ }\bibfield  {title} {\bibinfo {title} {Hierarchy of universal entanglement in {2D} measurement-based quantum computation},\ }\href {https://doi.org/10.1038/npjqi.2016.36} {\bibfield  {journal} {\bibinfo  {journal} {npj Quant. Inf.}\ }\textbf {\bibinfo {volume} {2}},\ \bibinfo {pages} {16036} (\bibinfo {year} {2016})}\BibitemShut {NoStop}%
		\bibitem [{\citenamefont {Gachechiladze}\ \emph {et~al.}(2019)\citenamefont {Gachechiladze}, \citenamefont {G{\"u}hne},\ and\ \citenamefont {Miyake}}]{GGM19}%
		\BibitemOpen
		\bibfield  {author} {\bibinfo {author} {\bibfnamefont {M.}~\bibnamefont {Gachechiladze}}, \bibinfo {author} {\bibfnamefont {O.}~\bibnamefont {G{\"u}hne}},\ and\ \bibinfo {author} {\bibfnamefont {A.}~\bibnamefont {Miyake}},\ }\bibfield  {title} {\bibinfo {title} {Changing the circuit-depth complexity of measurement-based quantum computation with hypergraph states},\ }\href {https://doi.org/10.1103/PhysRevA.99.052304} {\bibfield  {journal} {\bibinfo  {journal} {Phys. Rev. A}\ }\textbf {\bibinfo {volume} {99}},\ \bibinfo {pages} {052304} (\bibinfo {year} {2019})}\BibitemShut {NoStop}%
		\bibitem [{\citenamefont {Takeuchi}\ \emph {et~al.}(2019)\citenamefont {Takeuchi}, \citenamefont {Morimae},\ and\ \citenamefont {Hayashi}}]{TMH19}%
		\BibitemOpen
		\bibfield  {author} {\bibinfo {author} {\bibfnamefont {Y.}~\bibnamefont {Takeuchi}}, \bibinfo {author} {\bibfnamefont {T.}~\bibnamefont {Morimae}},\ and\ \bibinfo {author} {\bibfnamefont {M.}~\bibnamefont {Hayashi}},\ }\bibfield  {title} {\bibinfo {title} {Quantum computational universality of hypergraph states with {P}auli-{X} and {Z} basis measurements},\ }\href {https://doi.org/10.1038/s41598-019-49968-3} {\bibfield  {journal} {\bibinfo  {journal} {Sci. Repor.}\ }\textbf {\bibinfo {volume} {9}},\ \bibinfo {pages} {13585} (\bibinfo {year} {2019})}\BibitemShut {NoStop}%
		\bibitem [{\citenamefont {Raussendorf}\ \emph {et~al.}(2003)\citenamefont {Raussendorf}, \citenamefont {Browne},\ and\ \citenamefont {Briegel}}]{RBB03}%
		\BibitemOpen
		\bibfield  {author} {\bibinfo {author} {\bibfnamefont {R.}~\bibnamefont {Raussendorf}}, \bibinfo {author} {\bibfnamefont {D.~E.}\ \bibnamefont {Browne}},\ and\ \bibinfo {author} {\bibfnamefont {H.~J.}\ \bibnamefont {Briegel}},\ }\bibfield  {title} {\bibinfo {title} {Measurement-based quantum computation on cluster states},\ }\href {https://doi.org/10.1103/PhysRevA.68.022312} {\bibfield  {journal} {\bibinfo  {journal} {Phys. Rev. A}\ }\textbf {\bibinfo {volume} {68}},\ \bibinfo {pages} {022312} (\bibinfo {year} {2003})}\BibitemShut {NoStop}%
		\bibitem [{\citenamefont {Vijayan}\ \emph {et~al.}(2024)\citenamefont {Vijayan}, \citenamefont {Paler}, \citenamefont {Gavriel}, \citenamefont {Myers}, \citenamefont {Rohde},\ and\ \citenamefont {Devitt}}]{VPG+24}%
		\BibitemOpen
		\bibfield  {author} {\bibinfo {author} {\bibfnamefont {M.~K.}\ \bibnamefont {Vijayan}}, \bibinfo {author} {\bibfnamefont {A.}~\bibnamefont {Paler}}, \bibinfo {author} {\bibfnamefont {J.}~\bibnamefont {Gavriel}}, \bibinfo {author} {\bibfnamefont {C.~R.}\ \bibnamefont {Myers}}, \bibinfo {author} {\bibfnamefont {P.~P.}\ \bibnamefont {Rohde}},\ and\ \bibinfo {author} {\bibfnamefont {S.~J.}\ \bibnamefont {Devitt}},\ }\bibfield  {title} {\bibinfo {title} {Compilation of algorithm-specific graph states for quantum circuits},\ }\href {https://doi.org/10.1088/2058-9565/ad1f39} {\bibfield  {journal} {\bibinfo  {journal} {Quantum Sci. Technol.}\ }\textbf {\bibinfo {volume} {9}},\ \bibinfo {pages} {025005} (\bibinfo {year} {2024})}\BibitemShut {NoStop}%
		\bibitem [{\citenamefont {Kaldenbach}\ and\ \citenamefont {Heller}(2025)}]{KH25}%
		\BibitemOpen
		\bibfield  {author} {\bibinfo {author} {\bibfnamefont {T.~N.}\ \bibnamefont {Kaldenbach}}\ and\ \bibinfo {author} {\bibfnamefont {M.}~\bibnamefont {Heller}},\ }\bibfield  {title} {\bibinfo {title} {Mapping quantum circuits to shallow-depth measurement patterns based on graph states},\ }\href {https://doi.org/10.1088/2058-9565/ad802b} {\bibfield  {journal} {\bibinfo  {journal} {Quantum Sci. Technol.}\ }\textbf {\bibinfo {volume} {10}},\ \bibinfo {pages} {015010} (\bibinfo {year} {2025})}\BibitemShut {NoStop}%
		\bibitem [{\citenamefont {Peres}\ and\ \citenamefont {Galv{\~a}o}(2025)}]{PG25}%
		\BibitemOpen
		\bibfield  {author} {\bibinfo {author} {\bibfnamefont {F.~C.~R.}\ \bibnamefont {Peres}}\ and\ \bibinfo {author} {\bibfnamefont {E.~F.}\ \bibnamefont {Galv{\~a}o}},\ }\bibfield  {title} {\bibinfo {title} {Reducing depth and measurement weights in {P}auli-based computation},\ }\href {https://doi.org/10.1103/d3x5-cgky} {\bibfield  {journal} {\bibinfo  {journal} {Phys. Rev. A}\ }\textbf {\bibinfo {volume} {112}},\ \bibinfo {pages} {062604} (\bibinfo {year} {2025})}\BibitemShut {NoStop}%
		\bibitem [{\citenamefont {Fisher}\ \emph {et~al.}(2014)\citenamefont {Fisher}, \citenamefont {Broadbent}, \citenamefont {Shalm}, \citenamefont {Yan}, \citenamefont {Lavoie}, \citenamefont {Prevedel}, \citenamefont {Jennewein},\ and\ \citenamefont {Resch}}]{FBS+14}%
		\BibitemOpen
		\bibfield  {author} {\bibinfo {author} {\bibfnamefont {K.~A.~G.}\ \bibnamefont {Fisher}}, \bibinfo {author} {\bibfnamefont {A.}~\bibnamefont {Broadbent}}, \bibinfo {author} {\bibfnamefont {L.~K.}\ \bibnamefont {Shalm}}, \bibinfo {author} {\bibfnamefont {Z.}~\bibnamefont {Yan}}, \bibinfo {author} {\bibfnamefont {J.}~\bibnamefont {Lavoie}}, \bibinfo {author} {\bibfnamefont {R.}~\bibnamefont {Prevedel}}, \bibinfo {author} {\bibfnamefont {T.}~\bibnamefont {Jennewein}},\ and\ \bibinfo {author} {\bibfnamefont {K.~J.}\ \bibnamefont {Resch}},\ }\bibfield  {title} {\bibinfo {title} {Quantum computing on encrypted data},\ }\href {https://doi.org/10.1038/ncomms4074} {\bibfield  {journal} {\bibinfo  {journal} {Nat. Comm.}\ }\textbf {\bibinfo {volume} {5}},\ \bibinfo {pages} {3074} (\bibinfo {year} {2014})}\BibitemShut {NoStop}%
		\bibitem [{\citenamefont {Peres}\ and\ \citenamefont {Galv{\~{a}}o}(2023)}]{PG23}%
		\BibitemOpen
		\bibfield  {author} {\bibinfo {author} {\bibfnamefont {F.~C.~R.}\ \bibnamefont {Peres}}\ and\ \bibinfo {author} {\bibfnamefont {E.~F.}\ \bibnamefont {Galv{\~{a}}o}},\ }\bibfield  {title} {\bibinfo {title} {Quantum circuit compilation and hybrid computation using {P}auli-based computation},\ }\href {https://doi.org/10.22331/q-2023-10-03-1126} {\bibfield  {journal} {\bibinfo  {journal} {Quant.}\ }\textbf {\bibinfo {volume} {7}},\ \bibinfo {pages} {1126} (\bibinfo {year} {2023})}\BibitemShut {NoStop}%
		\bibitem [{\citenamefont {Lovsted}(2026)}]{Lov26}%
		\BibitemOpen
		\bibfield  {author} {\bibinfo {author} {\bibfnamefont {D.}~\bibnamefont {Lovsted}},\ }\emph {\bibinfo {title} {Resource-Efficient Secure Delegation via Pauli-Based Computation}},\ \href {https://doi.org/10.20381/ruor-32161} {Master's thesis},\ \bibinfo  {school} {University of Ottawa} (\bibinfo {year} {2026})\BibitemShut {NoStop}%
		\bibitem [{\citenamefont {Bremner}\ \emph {et~al.}(2016)\citenamefont {Bremner}, \citenamefont {Montanaro},\ and\ \citenamefont {Shepherd}}]{BMS16average}%
		\BibitemOpen
		\bibfield  {author} {\bibinfo {author} {\bibfnamefont {M.~J.}\ \bibnamefont {Bremner}}, \bibinfo {author} {\bibfnamefont {A.}~\bibnamefont {Montanaro}},\ and\ \bibinfo {author} {\bibfnamefont {D.~J.}\ \bibnamefont {Shepherd}},\ }\bibfield  {title} {\bibinfo {title} {{Average-Case Complexity Versus Approximate Simulation of Commuting Quantum Computations}},\ }\href {https://doi.org/10.1103/PhysRevLett.117.080501} {\bibfield  {journal} {\bibinfo  {journal} {Phys. Rev. Lett.}\ }\textbf {\bibinfo {volume} {117}},\ \bibinfo {pages} {080501} (\bibinfo {year} {2016})}\BibitemShut {NoStop}%
		\bibitem [{\citenamefont {Bremner}\ \emph {et~al.}(2017)\citenamefont {Bremner}, \citenamefont {Montanaro},\ and\ \citenamefont {Shepherd}}]{BMS17}%
		\BibitemOpen
		\bibfield  {author} {\bibinfo {author} {\bibfnamefont {M.~J.}\ \bibnamefont {Bremner}}, \bibinfo {author} {\bibfnamefont {A.}~\bibnamefont {Montanaro}},\ and\ \bibinfo {author} {\bibfnamefont {D.~J.}\ \bibnamefont {Shepherd}},\ }\bibfield  {title} {\bibinfo {title} {Achieving quantum supremacy with sparse and noisy commuting quantum computations},\ }\href {https://doi.org/10.22331/q-2017-04-25-8} {\bibfield  {journal} {\bibinfo  {journal} {Quant.}\ }\textbf {\bibinfo {volume} {1}},\ \bibinfo {pages} {8} (\bibinfo {year} {2017})}\BibitemShut {NoStop}%
		\bibitem [{\citenamefont {Codsi}\ and\ \citenamefont {van~de Wetering}(2025)}]{CvdW25}%
		\BibitemOpen
		\bibfield  {author} {\bibinfo {author} {\bibfnamefont {J.}~\bibnamefont {Codsi}}\ and\ \bibinfo {author} {\bibfnamefont {J.}~\bibnamefont {van~de Wetering}},\ }\bibfield  {title} {\bibinfo {title} {Classically simulating intermediate-scale instantaneous quantum polynomial circuits through a random graph approach},\ }\href {https://doi.org/10.1103/PhysRevA.111.012422} {\bibfield  {journal} {\bibinfo  {journal} {Phys. Rev. A}\ }\textbf {\bibinfo {volume} {111}},\ \bibinfo {pages} {012422} (\bibinfo {year} {2025})}\BibitemShut {NoStop}%
		\bibitem [{\citenamefont {Kissinger}\ and\ \citenamefont {van~de Wetering}(2019)}]{KvdW19}%
		\BibitemOpen
		\bibfield  {author} {\bibinfo {author} {\bibfnamefont {A.}~\bibnamefont {Kissinger}}\ and\ \bibinfo {author} {\bibfnamefont {J.}~\bibnamefont {van~de Wetering}},\ }\bibfield  {title} {\bibinfo {title} {Universal {MBQC} with generalised parity-phase interactions and {P}auli measurements},\ }\href {https://doi.org/10.22331/q-2019-04-26-134} {\bibfield  {journal} {\bibinfo  {journal} {Quant.}\ }\textbf {\bibinfo {volume} {3}},\ \bibinfo {pages} {134} (\bibinfo {year} {2019})}\BibitemShut {NoStop}%
		\bibitem [{\citenamefont {Ferguson}\ \emph {et~al.}(2021)\citenamefont {Ferguson}, \citenamefont {Dellantonio}, \citenamefont {Balushi}, \citenamefont {Jansen}, \citenamefont {D\"ur},\ and\ \citenamefont {Muschik}}]{FDB+21}%
		\BibitemOpen
		\bibfield  {author} {\bibinfo {author} {\bibfnamefont {R.~R.}\ \bibnamefont {Ferguson}}, \bibinfo {author} {\bibfnamefont {L.}~\bibnamefont {Dellantonio}}, \bibinfo {author} {\bibfnamefont {A.~A.}\ \bibnamefont {Balushi}}, \bibinfo {author} {\bibfnamefont {K.}~\bibnamefont {Jansen}}, \bibinfo {author} {\bibfnamefont {W.}~\bibnamefont {D\"ur}},\ and\ \bibinfo {author} {\bibfnamefont {C.~A.}\ \bibnamefont {Muschik}},\ }\bibfield  {title} {\bibinfo {title} {Measurement-{B}ased {V}ariational {Q}uantum {E}igensolver},\ }\href {https://doi.org/10.1103/PhysRevLett.126.220501} {\bibfield  {journal} {\bibinfo  {journal} {Phys. Rev. Lett.}\ }\textbf {\bibinfo {volume} {126}},\ \bibinfo {pages} {220501} (\bibinfo {year} {2021})}\BibitemShut {NoStop}%
		\bibitem [{\citenamefont {Lindell}(2017)}]{L17}%
		\BibitemOpen
		\bibfield  {author} {\bibinfo {author} {\bibfnamefont {Y.}~\bibnamefont {Lindell}},\ }\bibfield  {title} {\bibinfo {title} {How to simulate it–a tutorial on the simulation proof technique},\ }in\ \href@noop {} {\emph {\bibinfo {booktitle} {Tutorials on the Foundations of Cryptography: Dedicated to Oded Goldreich}}}\ (\bibinfo {year} {2017})\ pp.\ \bibinfo {pages} {277--346}\BibitemShut {NoStop}%
	\end{thebibliography}
\end{document}